\documentclass[12pt,peerreview, letterpaper, onecolumn]{IEEEtran}
\usepackage{amsfonts,amsmath,amssymb}
\usepackage{cite}
\usepackage{graphicx}
\usepackage{url}
\newtheorem{theorem}{Theorem}
 
 \newtheorem{lemma}{Lemma}
  \newtheorem{proposition}{Proposition}
 \newtheorem{corollary}{Corollary}
 \newtheorem{definition}{Definition}
  \newtheorem{remark}{Remark}

\begin{document}

\baselineskip 4.6ex

\title{ High-Rate and Full-Diversity  Space-Time Block Codes with Low Complexity Partial Interference Cancellation Group Decoding}
 \author{
Long~Shi,~\IEEEmembership{Student Member,~IEEE,} ~Wei~Zhang,~\IEEEmembership{Member,~IEEE,} and ~Xiang-Gen~Xia,~\IEEEmembership{Fellow,~IEEE}
 \thanks{Manuscript was submitted to  IEEE Trans. Commun.  on 23 March, 2010.}
\thanks{L. Shi and W.  Zhang are with School of Electrical Engineering and Telecommunications,   University of New South Wales,  Sydney, Australia (e-mail: \{long.shi; w.zhang\}@unsw.edu.au). Their work was supported in part by the Australian Research Council Discovery Project DP1094194.}
\thanks{X.-G. Xia is with Department of Electrical and Computer Engineering, University of Delaware, DE 19716, USA (e-mail: xxia@ee.udel.edu). His work was supported in part by the Air Force Office of Scientific
Research (AFOSR) under Grant No. FA9550-08-1-0219 and the World Class University (WCU) Program 2008-000-20014-0.}
}
\maketitle
\begin{abstract}
In this paper, we propose a systematic design of space-time block codes (STBC) which  can achieve high rate and full diversity when the partial interference cancellation (PIC) group decoding is used at receivers. The proposed codes can be applied to any number of transmit antennas and admit a low decoding complexity while achieving full diversity. For $M$ transmit antennas, in each codeword real and imaginary parts of $PM$ complex information symbols are parsed into $P$ diagonal layers and then encoded, respectively. With PIC group decoding, it is shown that the decoding complexity can be reduced to a joint decoding of $M/2$ real symbols.  In particular, for $4$ transmit antennas, the code has  real symbol pairwise (i.e., single complex symbol) decoding that achieves full diversity
 and the code rate is $4/3$.  Simulation results demonstrate that the full diversity is offered by the newly proposed STBC with the PIC group decoding.
\end{abstract}

 \begin{keywords}
 MIMO systems, Space-time block codes, partial interference cancellation, decoding complexity
 \end{keywords}

\newpage

\section{Introduction}
Full diversity and low decoding complexity have been considered as two fundamental properties which a good space-time block code (STBC) should possess for multiple-input multiple-output (MIMO) wireless communications. The first orthogonal STBC (OSTBC) was proposed by Alamouti which can achieve full transmit diversity for two transmit antennas\cite{Alamouti}. Inspired by the Alamouti scheme, seminal studies focused on the designs of OSTBC for its unique orthogonal  code structure which  ensures a single symbol  maximum likelihood (ML) decoding  \cite{Tarokh98}\cite{Tarokh00}\cite{Lu}. However, OSTBC suffers from the reduced symbol rate with an increase of the number of transmit antennas, especially when complex constellations are used \cite{Wang}. In spite of full diversity advantage, OSTBC fails to achieve full channel capacity in MIMO channels \cite{sandhu}.
To address the problem of low symbol rate and capacity loss  in OSTBC,  linear dispersion code (LDC) was proposed as a full-diversity scheme that is constructed linearly in space and time \cite{has} \cite{heath}. The LDC design can be viewed as a linear combination of a fixed set of dispersion matrices with the transmitted symbols (or equivalently, combining coefficients). Diagonal algebraic space-time (DAST) block codes in \cite {dast} and threaded algebraic space-time (TAST) codes in \cite{tast} were also proposed as two typical algebraic designs which can obtain both full diversity and full rate with moderate ML decoding complexity. However, it is noted that the  aforementioned high-rate codes  rely on ML decoding to collect full diversity which has high decoding complexity. Efficient designs of STBC with low decoding complexity were proposed, such as  coordinate interleaved orthogonal design (CIOD) with single-symbol ML decoding  in \cite{Ra} and quasi-orthogonal STBC (QOSTBC) in \cite{Yuen}\cite{quan}, but their rates are restricted by the rates of OSTBC.

 Recently, full diversity achieving  STBC
 based on linear receivers, such as the minimum mean square error (MMSE) receiver and zero-forcing (ZF) receiver,  were studied and proposed \cite{Liu}, \cite{Shang}. However, it was shown in \cite{Shang} that the rates of these STBC based on linear receivers are not more than one. To address the complexity and rate tradeoff, a general decoding scheme with code design criterion, referred to as  partial interference cancellation (PIC) group decoding algorithm, was proposed in \cite{guoxia}. In the PIC group decoding, the symbols to be decoded are divided into several groups after a linear PIC operation and then each group is decoded separately with ML decoding.  Therefore, the PIC group decoding can be viewed as an intermediate decoding between ML decoding and ZF decoding. Apparently, PIC group decoding complexity depends on the number of symbols to be decoded in each group. Moreover, a  successive interference cancellation (SIC)-aided PIC group decoding was proposed in \cite{guoxia}. Based on the design criterion of STBC with PIC group decoding derived in \cite{guoxia}, a systematic design of STBC achieving full diversity under PIC group decoding was developed in \cite{wei}. In subsequent work, a new design of STBC having an Alamouti-Toeplitz structure was proposed in \cite{jisit} which provides a lower PIC decoding complexity compared with the design in \cite{wei}. However, the decoding complexity of the STBC in \cite{jisit} is equivalent to a joint decoding of $M/2$ complex symbols.

In this paper, we propose a design of STBC with PIC group decoding that can achieve both full diversity and low decoding complexity. The decoding complexity is equal to a joint decoding of $M/2$ \emph{real} symbols  for $M$   transmit antennas, i.e., only half decoding complexity of the STBC in \cite{jisit}. For the proposed STBC,  real and imaginary parts of $PM$ complex information symbols are parsed into $P$ diagonal layers and encoded by linear transform matrices, respectively. The full diversity can be achieved by the proposed STBC with $P=2$ under PIC group decoding and with any  $P$ under  PIC-SIC group decoding, respectively.
 The code rate is equal to $\frac{PM}{2P+M-2}$.
In particular, for $4$ transmit antennas   the
code has real symbol pairwise (i.e. single complex symbol) decoding.
Furthermore, the code rate is $4/3$.
It should be noted that the existing
STBC with single complex symbol (or real symbol pairwise) decoding,
such as QOSTBC  \cite{Yuen},  \cite{quan}, and CIOD etc.
 in \cite{Ra} have symbol rates not larger than one.
Also the codes with linear receivers have single complex symbol decoding
but their rates can not be above one either \cite{Shang}.
 Simulation results show that the proposed code outperforms the CIOD in \cite{Ra} and the QOSTBC with the optimal rotation in \cite{quan} for $4$ transmit antennas at the same bandwidth efficiency.  Moreover, our code guarantees full diversity without performance loss compared with other PIC group decoding based STBC in \cite{guoxia}, \cite{wei} and \cite{jisit}, but a half decoding complexity is reduced.
It should be mentioned that the major difference between the code in \cite{jisit} and the one proposed in this paper is that a complex-valued linear transform matrix is used for input complex signal vector to construct the code in \cite{jisit}, whereas in this paper   two real-valued linear transform matrices  are used for real and imaginary parts of the signals, respectively. 
By doing so, half decoding complexity can be reduced.

 The rest of this paper is organized as follows. The system model is outlined in Section II. In Section III, a systematic design of STBC is proposed and a few code design examples are also given. The full diversity is proved under PIC group decoding in Section IV. In Section V, simulation results are presented. Finally, we conclude the paper in Section VI.

The following notations are used throughout this paper. Column vectors (matrices) are denoted by boldface lower (upper) case letters. Superscripts $(\cdot)^*$, $(\cdot)^t$ and $(\cdot)^H$ stand for conjugate, transpose, and conjugate transpose, respectively. $\mathbb{C}$ denotes the field of complex numbers  and $\mathbb{R}$ denotes the real field. $\mathbf{I}_n$ denotes the $n\times n$ identity matrix, and  $\mathbf{0}_{m\times n}$ denotes the $m\times n$ matrix  whose elements are all $0$. Additionally, $\{\cdot\}_R$ and $\{\cdot\}_I$ represent the real part and the imaginary part of   variables, respectively.

\section{System Model}\label{model}
Consider a MIMO system with $M$ transmit and $N$ receive antennas. Data symbols are first encoded into a space-time block code $\mathbf{X(s)}$ of size $T \times M$  where $T$ is block length of the codeword. In this paper, $\mathbf{X(s)}$ can be represented in a general dispersion form \cite{has} as follows:
\begin{eqnarray}
\mathcal{X}=\{\mathbf{X(s)}=\sum_{l=1}^L s_l A_l+s^{*}_lB_l\}
\end{eqnarray}
where the data symbols $\{{s_l}\},~l=1,2,\ldots,L$ are selected from a normalized complex constellation $\mathcal{A}$ such as QAM, and $A_l, B_l \in \mathbb{C}^{T\times M}$ are constant matrices called dispersion matrices. Then, the received signal from $N$ receive antennas can be arranged in a matrix $\mathbf{Y}\in \mathbb{C}^{T\times N}$ as follows
 \begin{eqnarray}\label{ry}
 \mathbf{Y}=\sqrt{\frac{\rho}{\mu}}\mathbf{X}(\mathbf{s})\mathbf{H}+\mathbf{W},
 \end{eqnarray}
where  $\mathbf{H}$ is the channel matrix of size $M\times N$ with the entries being independent and identically distributed (i.i.d) $\mathcal{CN}(0,1)$. The channels are assumed to experience the quasi-static fading. $\mathbf{W}\in \mathbb{C}^{T\times N}$ is the noise matrix whose elements are also i.i.d distributed $\mathcal{CN}(0,1)$. $\rho$ denotes the average signal-to-noise ratio (SNR) per receive antenna, and the transmitted power is normalized by the factor $\mu$ such that the average energy of the coded symbols transmitting from all antennas during one symbol period is one. We suppose that channel state information is available at receiver only.

To decode the transmitted sequence $\mathbf{s}$, we need to extract $\mathbf{s}$ from $\mathbf{X}(\mathbf{s})$. Through some operations, we can get an equivalent signal model from (\ref{ry}) as \cite{Shang}\cite{guoxia}
\begin{eqnarray}\label{eqyc}
\mathbf{y}=\sqrt{\frac{\rho}{\mu}}\mathcal{H}_c\mathbf{s}+\mathbf{w},
 \end{eqnarray}
 where $\mathbf{y} \in \mathbb{C}^{TN}\times 1$ is a received signal vector, $\mathbf{w}\in \mathbb{C}^{TN\times 1}$ is a noise vector, and $\mathcal{H}_c \in \mathbb{C}^{TN\times L}$ is an equivalent channel matrix. Denote  $\mathbf{y}=\mathbf{y}_R+j\mathbf{y}_I$,  $\mathbf{s}=\mathbf{s}_R+j\mathbf{s}_I$, and
  $\mathbf{w}=\mathbf{w}_R+j\mathbf{w}_I$. Then, we can rewrite (\ref{eqyc}) as a real matrix form given by
 \begin{eqnarray}\label{eqyr}
\left[\begin{array}{c}
\mathbf{y}_R\\
\mathbf{y}_I
 \end{array}\right]
=\sqrt{\frac{\rho}{\mu}}\mathcal{H}
\left[\begin{array}{c}
\mathbf{s}_R\\
\mathbf{s}_I
 \end{array}\right]
+
\left[\begin{array}{c}
\mathbf{w}_R\\
\mathbf{w}_I
 \end{array}\right],
 \end{eqnarray}
where $\mathcal{H} \in \mathbb{R}^{2TN\times 2L}$ has $2L$ real column vectors $\{\mathbf{g}_{l}\}$ for $l=1,2,\cdots,2L$.

In \cite{guoxia}, a new decoding scheme was proposed, referred to as PIC group decoding which aims to address the rate and complexity tradeoff of the code while achieving full diversity.  In the PIC group decoding, the equivalent channel matrix $\mathcal{H}_c \in \mathbb{C}^{TN\times L}$ is divided into a number of column groups $\{\mathbf{G}_1, \mathbf{G}_2,\cdots, \mathbf{G}_P\}$ with $l_p$ columns for group $\mathbf{G}_p$, $p=1,2,\cdots,P$, and $\sum_{p=1}^P l_p=L$. Then,  for group $\mathbf{G}_p$ a group ZF is applied   to cancel the interferences coming from all the other groups, i.e., $\{\mathbf{G}_1,\cdots, \mathbf{G}_{p-1}, \mathbf{G}_{p+1}, \cdots, \mathbf{G}_P\}$, followed by a joint decoding of symbols corresponding to the   group $\mathbf{G}_p$.  Note that   the  interference cancellation  (i.e., the group ZF) mainly involves with linear matrix computations, whose computational complexity  is small compared to the joint decoding with an exhaustive search of all candidate symbols in one group. To evaluate the decoding complexity of the PIC group decoding, we mainly focus on the computational complexity of the joint decoding of each group under the PIC group decoding algorithm. The joint decoding complexity can be characterized by the number of Frobenius norms calculated in the decoding process. In the PIC group decoding algorithm, the complexity is then   $\mathcal{O}= \sum_{p=1}^P |\mathcal{A}|^{l_p}$. It can be seen that the PIC group decoding provides a flexible decoding complexity which can vary from the ZF decoding complexity $L|\mathcal{A}|$ to the ML decoding complexity $|\mathcal{A}|^{L}$.

An SIC-aided PIC group decoding algorithm, namely PIC-SIC group decoding was also proposed in \cite{guoxia}. Similar to the  BLAST detection algorithm \cite{Foschini}, the PIC-SIC group decoding is performed  after removing the already-decoded symbol set from the received signals to reduce the interference. If each group has only one symbol, then the PIC-SIC group decoding will be equivalent to the BLAST detection.

In \cite{guoxia},  full-diversity STBC design criteria were derived
when the PIC group decoding and the PIC-SIC group decoding are used at the receiver. In the following, we cite the main results of the STBC design criteria proposed in \cite{guoxia}.

\begin{proposition}\label{prop1}
\cite[Theorem 1]{guoxia} [\emph{Full-Diversity Criterion under PIC Group Decoding}]

For an  STBC  $\mathbf{X}$  with the PIC group decoding, the full diversity is achieved when
  \begin{enumerate}
    \item   the code $\mathbf{X}$ satisfies the full rank criterion, i.e.,
it achieves full diversity when
the ML receiver is used; \emph{and}
    \item
    for any $p$, $1\leq p\leq P$, any non-zero linear combination over $\Delta \mathcal{A}$ of the vectors in the $p$th group $\mathbf{G}_p$ does not belong to the space linearly spanned by all the vectors in the remaining vector groups, as long as $\mathbf{H}\neq 0$, i.e.,
    \begin{eqnarray}
\sum_{\forall i\in \mathcal{I}_p} a_i \mathbf{g}_{i} \neq \sum_{\forall j\notin \mathcal{I}_p} c_j \mathbf{g}_{j}, \,\, \,\,\,\,a_i\in \Delta \mathcal{A},\mathrm{not\, all\, zero}, \,\,c_j\in \mathbb{C}
\end{eqnarray}
where $
\mathcal{I}_p=\{I_{p,1}, I_{p,2},\cdots, I_{p,l_p}\}$ is the index  set corresponding to the vector group $\mathbf{G}_{p}$ and $\Delta \mathcal{A}=\{S-\hat{S}, | S, \hat{S}\in \mathcal{A}\}$.

  \end{enumerate}
\end{proposition}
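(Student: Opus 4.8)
The plan is to bound the pairwise error probability (PEP) of the PIC group decoding conditioned on $\mathbf{H}$, and then to show that after averaging over the channel it decays like $\rho^{-MN}$, which is the full diversity order. First I would make the group-$p$ decoding explicit: let $\mathbf{Q}_p$ denote the orthogonal projection onto the orthogonal complement of the subspace spanned by the columns of all groups other than $\mathbf{G}_p$. Applying $\mathbf{Q}_p$ to the equivalent model (\ref{eqyc}) annihilates the interference and yields $\mathbf{Q}_p\mathbf{y}=\sqrt{\rho/\mu}\,\mathbf{Q}_p\mathbf{G}_p\mathbf{s}_{(p)}+\mathbf{Q}_p\mathbf{w}$, where $\mathbf{s}_{(p)}$ collects the symbols indexed by $\mathcal{I}_p$. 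Since an error in the (non-SIC) PIC decoding of group $p$ is governed only by differences $\Delta\mathbf{s}_{(p)}$ supported on $\mathcal{I}_p$, a standard Chernoff argument gives the conditional bound $P(\mathbf{s}_{(p)}\to\hat{\mathbf{s}}_{(p)}\mid\mathbf{H})\le\exp(-\frac{\rho}{4\mu}\|\mathbf{Q}_p\mathbf{G}_p\Delta\mathbf{s}_{(p)}\|^2)$, and a union bound over the $P$ groups and over the finitely many difference vectors reduces everything to controlling $\|\mathbf{Q}_p\mathbf{G}_p\Delta\mathbf{s}_{(p)}\|^2$.

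The key structural observation I would exploit is that every column $\mathbf{g}_l$ of $\mathcal{H}_c$ is a linear function of $\mathbf{H}$, so writing $\mathbf{H}=r\,\mathbf{U}$ with $r=\|\mathbf{H}\|_F$ and $\|\mathbf{U}\|_F=1$ gives $\mathbf{G}_p(\mathbf{H})=r\,\mathbf{G}_p(\mathbf{U})$; moreover $\mathbf{Q}_p$ depends only on the column space of the remaining groups and is therefore invariant under $\mathbf{H}\mapsto r\mathbf{U}$. Hence $\|\mathbf{Q}_p\mathbf{G}_p(\mathbf{H})\Delta\mathbf{s}_{(p)}\|^2=r^2\,\|\mathbf{Q}_p\mathbf{G}_p(\mathbf{U})\Delta\mathbf{s}_{(p)}\|^2$, i.e. the distance is homogeneous of degree two in $\|\mathbf{H}\|_F$. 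This is precisely the scaling needed to extract diversity $MN$, provided the unit-sphere factor is bounded away from zero.

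Next I would produce a uniform positive lower bound $\delta=\min_{\|\mathbf{U}\|_F=1}\min_{\Delta\mathbf{s}_{(p)}\neq 0}\|\mathbf{Q}_p\mathbf{G}_p(\mathbf{U})\Delta\mathbf{s}_{(p)}\|^2>0$. The second condition of the proposition is exactly what makes each individual term strictly positive: $\mathbf{G}_p(\mathbf{U})\Delta\mathbf{s}_{(p)}=\sum_{i\in\mathcal{I}_p}a_i\mathbf{g}_i$ is a nonzero combination over $\Delta\mathcal{A}$ of the group-$p$ vectors, which by hypothesis does not lie in the span of the other groups, so its image under $\mathbf{Q}_p$ is nonzero for every $\mathbf{U}\neq 0$. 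To upgrade this pointwise positivity to a strictly positive minimum I would invoke compactness of the unit sphere together with the finiteness of $\Delta\mathcal{A}$; this requires $\mathbf{U}\mapsto\mathbf{Q}_p(\mathbf{U})$ to be continuous, which is where the first condition enters: the full-rank criterion forces $\mathcal{H}_c$ to have full column rank for every $\mathbf{H}\neq 0$ (if $\Delta\mathbf{X}\mathbf{H}=\mathbf{0}$ with $\Delta\mathbf{X}$ of full column rank then $\mathbf{H}=\mathbf{0}$), so the subspaces spanned by the column groups keep constant dimension and $\mathbf{Q}_p$ varies continuously on the sphere. With $\delta>0$ in hand, the conditional PEP is bounded by $\exp(-\frac{\rho\delta}{4\mu}\|\mathbf{H}\|_F^2)$, and since $\|\mathbf{H}\|_F^2$ is a sum of $MN$ independent exponential variables, taking the expectation yields a bound of the form $(1+\rho\delta/4\mu)^{-MN}=\mathcal{O}(\rho^{-MN})$, i.e. full diversity.

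I expect the main obstacle to be the continuity/compactness step that extracts the uniform constant $\delta$. The projection $\mathbf{Q}_p$ is a nonlinear (pseudoinverse-type) function of $\mathbf{H}$, and it is well behaved only as long as the competing column groups neither drop in rank nor become arbitrarily ill-conditioned as $\mathbf{U}$ ranges over the sphere; verifying that the first condition rules out such degeneracies for \emph{all} $\mathbf{U}$, not merely almost all, is the delicate part of the argument and is the reason both conditions, rather than the second alone, are required.
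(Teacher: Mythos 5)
First, a point of comparison: the paper does not prove Proposition~1 at all --- it is imported verbatim as \cite[Theorem 1]{guoxia} and simply cited, so there is no in-paper proof to measure you against. Your sketch is in effect a reconstruction of the argument in that reference, and it does follow the same broad strategy used there (conditional Chernoff bound on the post-projection PEP, degree-two homogeneity in $\|\mathbf{H}\|_F$, and a compactness argument on the unit sphere of channels to extract a uniform constant $\delta$). The setup, the projection $\mathbf{Q}_p$, the union bound over the finite difference set, and the final averaging over $\|\mathbf{H}\|_F^2$ are all sound.

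The genuine gap is in the step you yourself flag as delicate, and your proposed fix does not close it. You claim that the full-rank criterion forces $\mathcal{H}_c$ to have full column rank for every $\mathbf{H}\neq 0$, arguing that $\Delta\mathbf{X}\,\mathbf{H}=\mathbf{0}$ with $\Delta\mathbf{X}$ of full column rank implies $\mathbf{H}=\mathbf{0}$. But condition 1) only asserts that $\mathbf{X}(\mathbf{s})-\mathbf{X}(\hat{\mathbf{s}})$ is full rank when $\mathbf{s}-\hat{\mathbf{s}}$ ranges over the \emph{finite} set of constellation differences, whereas a rank deficiency of $\mathcal{H}_c(\mathbf{H})$ corresponds to $\mathbf{X}(\mathbf{c})\mathbf{H}=\mathbf{0}$ for some arbitrary complex $\mathbf{c}\neq 0$, and nothing in condition 1) prevents $\mathbf{X}(\mathbf{c})$ from being singular for such $\mathbf{c}$. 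Consequently the interference subspace $\mathrm{span}\{\mathbf{g}_j : j\notin\mathcal{I}_p\}$ may drop dimension on a subset of the unit sphere, $\mathbf{U}\mapsto\mathbf{Q}_p(\mathbf{U})$ need not be continuous there, and your $\delta$ becomes an infimum that could vanish even though every individual term is positive: writing $\|\mathbf{Q}_p\mathbf{v}\|=\min_{\mathbf{c}}\|\mathbf{v}-\sum_{j\notin\mathcal{I}_p}c_j\mathbf{g}_j\|$, a minimizing sequence $\mathbf{U}_n\to\mathbf{U}_0$ can have coefficients $c_{n,j}$ blowing up, so one cannot pass to the limit and invoke condition 2) at $\mathbf{U}_0$; the limit only produces a vanishing linear combination of the \emph{interfering} columns, which neither hypothesis forbids. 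Closing this requires a separate lower-semicontinuity or normalization argument (this is precisely the kind of issue addressed in the correction note accompanying \cite{guoxia}); as written, the uniform positivity of $\delta$ is asserted rather than proved.
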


\begin{proposition}\label{prop2}
\cite{guoxia} [\emph{Full-Diversity Criterion under PIC-SIC Group Decoding}]

For an  STBC  $\mathbf{X}$  with the PIC-SIC group decoding, the full diversity is achieved when
  \begin{enumerate}
    \item   the code $\mathbf{X}$ satisfies the full rank criterion, i.e.,
it achieves full diversity when
the ML receiver is used; \emph{and}
    \item   at each decoding stage, for $\mathbf{G}_{q_1}$, which corresponds to the current to-be decoded symbol group $\mathbf{s}_{q_1}$,   any non-zero linear combination over $\Delta \mathcal{A}$ of the vectors in   $\mathbf{G}_{q_1}$ does not belong to the space linearly spanned by all the vectors in the remaining groups $\mathbf{G}_{q_2},  \cdots, \mathbf{G}_{q_L}$  corresponding to yet uncoded symbol groups, as long as $\mathbf{H}\neq 0$.
  \end{enumerate}
\end{proposition}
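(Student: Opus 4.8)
The plan is to analyze the PIC-SIC group decoding stage by stage and to show that at each stage the probability of decoding the current group in error decays with the signal-to-noise ratio at the full-diversity rate $\rho^{-MN}$, so that a union bound over the finitely many stages preserves full diversity. Throughout I would adopt the standard pairwise-error-probability (PEP) viewpoint together with the genie-aided conditioning that all previously decoded groups were decoded correctly.

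First I would fix the decoding order $q_1, q_2, \ldots, q_P$ and consider the stage at which group $\mathbf{s}_{q_1}$ is decoded. Under the genie assumption that the earlier groups have been recovered correctly, their contribution has been exactly removed from $\mathbf{y}$ by the SIC step, so the residual signal carries only $\mathbf{s}_{q_1}$ together with the yet-undecoded groups $\mathbf{s}_{q_2}, \ldots$. I would then write down the ZF projection $\mathbf{Q}_{q_1}$ that annihilates the columns of $\mathbf{G}_{q_2}, \ldots$, apply it to the residual, and obtain an interference-free model for $\mathbf{s}_{q_1}$ whose effective channel is $\mathbf{Q}_{q_1}$ acting on the columns of $\mathbf{G}_{q_1}$.

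Second I would invoke condition (2): since no non-zero linear combination over $\Delta\mathcal{A}$ of the vectors in $\mathbf{G}_{q_1}$ lies in the span of the remaining undecoded groups, the projection sends every error-difference vector $\sum_{i} a_i \mathbf{g}_i$ with $a_i \in \Delta\mathcal{A}$ not all zero to a non-zero vector whenever $\mathbf{H}\neq 0$. Combined with the full-rank criterion of condition (1), this is exactly the hypothesis needed in the PEP bound — the same computation that underlies Proposition~\ref{prop1}, now applied to the subsystem of current-plus-undecoded groups — to conclude that the conditional PEP for group $q_1$ decays like $\rho^{-MN}$. Carrying this through every stage, where at each stage the interfering set is only the still-undecoded groups, shows that every stage attains full diversity conditioned on correct earlier decisions.

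Finally I would remove the genie assumption by an error-propagation argument. Writing $E_k$ for the event that the first error occurs at stage $k$, I would bound the overall block error probability by $\sum_k P(E_k)$ and note that $P(E_k)$ is at most the conditional error probability of stage $k$ given all earlier stages correct, which by the previous step is $O(\rho^{-MN})$. As there are finitely many stages, the total error probability is $O(\rho^{-MN})$ and full diversity follows. The main obstacle I anticipate is precisely this last step: one must verify that error propagation through the SIC cancellation does not degrade the diversity order, i.e., that conditioning on correct prior decisions is legitimate for the diversity analysis. The key observation is that a first error at stage $k$ is itself a full-diversity event, so the dominant error exponent is left unchanged.
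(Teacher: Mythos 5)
The paper itself offers no proof of this statement: Proposition~\ref{prop2} is quoted from \cite{guoxia} as a known design criterion, so the only benchmark is the proof in that reference, and your sketch reconstructs essentially that argument. Its architecture is sound: bound the block error probability by $\sum_k P(E_k)$ over ``first error at stage $k$'' events, note that on $E_k$ the actual SIC cancellation coincides with genie-aided cancellation so $P(E_k)$ is dominated by the genie-aided stage-$k$ error probability, and reduce each stage to a PIC-type analysis in which only the yet-undecoded groups act as interference --- which is exactly where condition (2), restricted to the undecoded groups, enters. The one step you should not wave at is the passage from ``the projection $\mathbf{Q}_{q_1}$ of every non-zero $\Delta\mathcal{A}$-combination of the columns of $\mathbf{G}_{q_1}$ is non-zero whenever $\mathbf{H}\neq 0$'' to a full-diversity PEP bound: non-vanishing alone does not give the diversity order. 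One needs a \emph{uniform} lower bound of the form $\bigl\|\mathbf{Q}_{q_1}\sum_i a_i\mathbf{g}_i\bigr\|^2 \geq c\,\bigl\|\sum_i a_i\mathbf{g}_i\bigr\|^2$ with $c>0$ independent of the channel realization, obtained by a continuity and compactness argument over the sphere of channel directions together with the finiteness of $\Delta\mathcal{A}$, after which condition (1) converts the right-hand side into the full-rank ML distance. You defer this to ``the same computation that underlies Proposition~\ref{prop1},'' which is legitimate since that is precisely where \cite{guoxia} does the work, but in a self-contained write-up that uniformity is the crux and must be made explicit. With that caveat, your sketch is correct and follows the same route as the cited source.
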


\section{Proposed STBC}
In this section, a systematic design of linear dispersion STBC is presented  and  two design examples are given for four and six transmit antennas, respectively.

\subsection{A Systematic Design}
Suppose that $M$ is even. Our proposed STBC $\mathbf{\Phi}$ is of size $T \times M$, and given by
\begin{eqnarray} \label{new}
\mathbf{\Phi}_{M,T,P}=\mathbf{A}_{M,T,P}+j~\mathbf{B}_{M,T,P}
\end{eqnarray}

where the codeword matrices for $\mathbf{A}_{M,T,P}$ and $\mathbf{B}_{M,T,P}$ are given by
\begin{eqnarray}\label{AB}
 \mathbf{A}_{M,T,P}
 =\left[\begin{array}{cc}
\mathbf{C}^1_R & \mathbf{C}^2_R \\
 -\mathbf{C}^2_R & \mathbf{C}^1_R
 \end{array}\right],
 ~\mathbf{B}_{M,T,P}\label{B}
  =\left[\begin{array}{cc}
\mathbf{C}^1_I & \mathbf{C}^2_I \\
 \mathbf{C}^2_I & -\mathbf{C}^1_I
 \end{array}\right].
\end{eqnarray}
Note that   $\mathbf{A}_{M,T,P}$ and $\mathbf{B}_{M,T,P}$ are both real matrices of size $T \times M$
($T=M+2P-2$).  $\mathbf{C}^i_R $ and $\mathbf{C}^i_I$ are real and imaginary parts of $\mathbf{C}^i\in \mathbb{C}^{\frac{T}{2}\times \frac{M}{2}}$ $(i=1,2)$ which is given by
\begin{eqnarray}\label{C}
 \mathbf{C}^i&=&\mathbf{C}^i_R+j~\mathbf{C}^i_I\\&=&
 \nonumber\left[\begin{array}{cccc}
                      {x}_{q_i+1}              &  0                    & \ldots        & 0 \\
                      {x}_{q_i+\frac{M}{2}+1}  &  {x}_{q_i+2}       & \ddots        &  \vdots \\
                       \vdots                     &  {x}_{q_i+\frac{M}{2}+2}           & \ddots & 0 \\
                      {x}_{q_i+(P-1)\frac{M}{2}+1}     &  \vdots            & \ddots        & {x}_{q_i+\frac{M}{2}} \\
                       0                          & {x}_{q_i+(P-1)\frac{M}{2}+2}              & \ddots        &  {x}_{q_i+M}\\
                       \vdots                          & \vdots               & \ddots        &  \vdots\\
                       0                          & \vdots              & \ddots        &  {x}_{q_i+P\frac{M}{2}}
                    \end{array}
\right],
\end{eqnarray}
with $q_i=(i-1)P\frac{M}{2}$ and the $p$ th diagonal layer from left to right written as the $\frac{M}{2} \times 1$ vector $\mathbf{X}^i_p~(i=1,2;~p=1,2,~\cdots,P)$, given by
\begin{eqnarray}\label{xir}
\mathbf{X}^i_p
=\left[\begin{array}{cccc}
{x}_{q_i+(p-1)\frac{M}{2}+1}
 & {x}_{q_i+(p-1)\frac{M}{2}+2}
&\cdots & {x}_{q_i+(p-1)\frac{M}{2}+ \frac{M}{2}}
\end{array}\right]^t.
\end{eqnarray}
Moreover,  the real and imaginary parts of $\mathbf{X}^i_p=\mathbf{X}^i_{p,R}+j~\mathbf{X}^i_{p,I}$ are given by respectively
\begin{eqnarray}
\mathbf{X}^i_{p,R} &=& \mathbf{\Theta}_{A} ~\mathbf{s}^i_{p,R} \label{roxir}\\
\mathbf{X}^i_{p,I} &=& \mathbf{\Theta}_{B} ~\mathbf{s}^i_{p,I} \label{roxii}
\end{eqnarray}
where   $\mathbf{\Theta}_{A}, \mathbf{\Theta}_{B}  \in \mathbb{R}^{\frac{M}{2} \times \frac{M}{2}}$ can be different linear transform matrices chosen from \cite{viterbo} and \cite{DRT}, and the $\frac{M}{2} \times 1$ vector $\mathbf{s}^i_p =\mathbf{s}^i_{p,R}+j~\mathbf{s}^i_{p,I}~(i=1,2;~p=1,2,~\ldots,P)$ are given by
\begin{eqnarray}\label{eqn:sip}
\mathbf{s}^i_p=\left[\begin{array}{cccc}
{s}_{q_i+(p-1)\frac{M}{2}+1}
 & {s}_{q_i+(p-1)\frac{M}{2}+2}
&\cdots & {s}_{q_i+(p-1)\frac{M}{2}+ \frac{M}{2}}
\end{array}\right]^t.
\end{eqnarray}

\begin{remark}
The  rate of the code $\mathbf{\Phi}_{M,T,P}$ is
\begin{eqnarray}\label{rate}
R=\frac{L}{T}=\frac{MP}{M+2P-2}
\end{eqnarray}
which is the same as that of STBCs with PIC group decoding proposed in \cite{wei} and \cite{jisit}.
\end{remark}

\begin{remark}
It should be mentioned that the code structure (\ref{new}) is similar to the one in \cite{jisit}. The main difference is that in \cite{jisit} a linear transform matrix is used for input complex symbol vectors in the code construction and the matrix does not have to be real-valued, whereas in the design of $\mathbf{\Phi}_{M,T,P}$ in (\ref{new}), two real linear transform matrices $\mathbf{\Theta}_A$ and
$\mathbf{\Theta}_B$ are used.
 Later, we will see that the proposed code in (\ref{new}) with PIC group decoding of real-valued signals yields lower decoding complexity than the one in \cite{jisit}.
\end{remark}

%
%
%

\subsection{Code Design Examples for $P=2$}
1) For Four Transmit Antennas $M=4$

Consider the case with $M=4$ transmit antennas. According to the design in (\ref{new}), we have
\begin{equation}\label{m4}
\mathbf{\Phi}_{4,6,2}=\mathbf{A}_{4,6,2}+j~\mathbf{B}_{4,6,2}
\end{equation}
where
\begin{eqnarray}
  \mathbf{A}_{4,6,2} = \left[\begin{array}{cccc}
                        {x}_{1,R}& 0 &  {x}_{5,R} & 0 \\
                       {x}_{3,R} & {x}_{2,R} & {x}_{7,R} & {x}_{6,R}\\
                       0  &  {x}_{4,R} & 0  &  {x}_{8,R}\\
                        -{x}_{5,R} & 0 & {x}_{1,R}& 0\\
                        -{x}_{7,R}& -{x}_{6,R} & {x}_{3,R} & {x}_{2,R}\\
                        0  &  -{x}_{8,R} & 0  &  {x}_{4,R}
                     \end{array}
\right],
 \end{eqnarray}
 with
 $\left[
 \begin{array}{cc}
    {x}_{\{2(i-1)+1\},R} &   {x}_{\{2(i-1)+2\},R}
 \end{array}\right]^t
=\mathbf{\Theta}_{A}
  \left[\begin{array}{cc}
s_{\{2(i-1)+1\},R} & s_{\{2(i-1)+2\},R}
\end{array}\right]^t$  for $i=1,2,3,4$,
and
\begin{eqnarray}
   \mathbf{B}_{4,6,2} =\left[\begin{array}{cccc}
                       {x}_{1,I} & 0 &  {x}_{5,I} & 0 \\
                       {x}_{3,I} & {x}_{2,I} & {x}_{7,I} & {x}_{6,I}\\
                       0  &  {x}_{4,I} & 0  &  {x}_{8,I}\\
                        {x}_{5,I} & 0 & -{x}_{1,I} & 0\\
                        {x}_{7,I} & {x}_{6,I} & -{x}_{3,I} & -{x}_{2,I}\\
                        0  &  {x}_{8,I} & 0  &  -{x}_{4,I}
                     \end{array}
\right]
 \end{eqnarray}
with
 $\left[
 \begin{array}{cc}
    {x}_{\{2(i-1)+1\},I} &   {x}_{\{2(i-1)+2\},I}
 \end{array}\right]^t
=\mathbf{\Theta}_{B}
  \left[\begin{array}{cc}
s_{\{2(i-1)+1\},I} & s_{\{2(i-1)+2\},I}
\end{array}\right]^t$  for $i=1,2,3,4$.

For simplicity, the same linear transform matrix $\mathbf{\Theta}_{2\times2}$ is used for $\mathbf{\Theta}_{A}$ and $\mathbf{\Theta}_{B}$   as
\begin{eqnarray}\label{rota4}
\mathbf{\Theta}_{2\times 2}=\left[\begin{array}{cc}
\cos\alpha & \sin\alpha\\
-\sin\alpha & \cos\alpha
\end{array}\right],
\end{eqnarray}
with $\alpha=1.02$ \cite{guoxia}.

%

Then, the codeword matrix of $\mathbf{\Phi}_{4,6,2}$ is written as
\begin{eqnarray}\label{M4}
\mathbf{\Phi}_{4,6,2}
  =\left[\begin{array}{cccc}
                        {x}_{1,R}+j{x}_{1,I} & 0 &  {x}_{5,R}+j{x}_{5,I} & 0 \\
                       {x}_{3,R}+j{x}_{3,I} & {x}_{2,R}+j{x}_{2,I} & {x}_{7,R}+j{x}_{7,I} & {x}_{6,R}+j{x}_{6,I}\\
                       0  &  {x}_{4,R}+j{x}_{4,I} & 0  &  {x}_{8,R}+j{x}_{8,I}\\
                        -{x}_{5,R}+j{x}_{5,I} & 0 & {x}_{1,R}-j{x}_{1,I} & 0\\
                        -{x}_{7,R}+j{x}_{7,I} & -{x}_{6,R}+j{x}_{6,I} & {x}_{3,R}-j{x}_{3,I} & {x}_{2,R}-j{x}_{2,I}\\
                        0  &  -{x}_{8,R}+j{x}_{8,I} & 0  &  {x}_{4,R}-j{x}_{4,I}
                     \end{array}
\right],
  \end{eqnarray}


The rate of the code  $\mathbf{\Phi}_{4,6,2}$ is $4/3$ and equal to that of  $\mathbf{C}_{4,6,2}$ in \cite[Eq. (29)]{wei} and $\mathbf{B}_{4,6,2}$ in \cite[Eq. (37)]{jisit}.

2) For Six Transmit Antennas $M=6$

For given $T=8$, the code $\mathbf{\Phi}_{6,8,2}$ with six transmit antennas is designed as follows
\begin{equation}\label{m6}
\mathbf{\Phi}_{6,8,2}=\mathbf{A}_{6,8,2}+j~\mathbf{B}_{6,8,2}
\end{equation}
where
\begin{eqnarray}
  \mathbf{A}_{6,8,2} = \left[\begin{array}{cccccc}
                        {x}_{1,R}& 0 &0 &  {x}_{7,R} & 0 & 0 \\
                       {x}_{4,R} & {x}_{2,R} &0& {x}_{10,R} & {x}_{8,R}&0\\
                       0  & {x}_{5,R} & {x}_{3,R}& 0  & {x}_{11,R} & {x}_{9,R}\\
                       0 & 0 &{x}_{6,R}& 0 & 0 &{x}_{12,R}\\
                     -{x}_{7,R} & 0 & 0 & {x}_{1,R}& 0 &0\\
                      -{x}_{10,R} & -{x}_{8,R}& 0 & {x}_{4,R} & {x}_{2,R} &0\\
                        0  & -{x}_{11,R} & -{x}_{9,R} & 0  & {x}_{5,R} & {x}_{3,R}\\
                        0 & 0 &-{x}_{12,R}& 0 & 0 &{x}_{6,R}
                     \end{array}
\right]
 \end{eqnarray}
with
\begin{eqnarray}
\left[\begin{array}{c}\label{rotar6}
{x}_{\{3(i-1)+1\},R} \\ {x}_{\{3(i-1)+2\},R}\\ {x}_{\{3(i-1)+3\},R}
\end{array}\right]
=\mathbf{\Theta}_A
  \left[\begin{array}{c}
s_{\{3(i-1)+1\},R}\\ s_{\{3(i-1)+2\},R}\\ s_{\{3(i-1)+3\},R}
\end{array}\right],
\end{eqnarray}
and
\begin{eqnarray}
   \mathbf{B}_{6,8,2} =\left[\begin{array}{cccccc}
                       {x}_{1,I}& 0 &0 &  {x}_{7,I} & 0 & 0 \\
                       {x}_{4,I} & {x}_{2,I} &0& {x}_{10,I} & {x}_{8,I}&0\\
                       0  & {x}_{5,I} & {x}_{3,I}& 0  & {x}_{11,I} & {x}_{9,I}\\
                       0 & 0 &{x}_{6,I}& 0 & 0 &{x}_{12,I}\\
                     {x}_{7,I} & 0 & 0 & -{x}_{1,R}& 0 &0\\
                      {x}_{10,I} & {x}_{8,I}& 0 & -{x}_{4,I} & -{x}_{2,I} &0\\
                        0  & {x}_{11,I} & {x}_{9,I} & 0  & -{x}_{5,I} & -{x}_{3,I}\\
                        0 & 0 &{x}_{12,I}& 0 & 0 &-{x}_{6,I}
                     \end{array}
\right]
 \end{eqnarray}
with
\begin{eqnarray}
\left[\begin{array}{c}\label{rotai6}
{x}_{\{3(i-1)+1\},I} \\ {x}_{\{3(i-1)+2\},I}\\ {x}_{\{3(i-1)+3\},I}
\end{array}\right]
=\mathbf{\Theta}_B
  \left[\begin{array}{c}
s_{\{3(i-1)+1\},I}\\ s_{\{3(i-1)+2\},I}\\ s_{\{3(i-1)+3\},I}
\end{array}\right],
\end{eqnarray}
for $i=1,2,3,4$.

The same linear transform matrix $\mathbf{\Theta}_{3\times3}$ is used for $\mathbf{\Theta}_{A}$ and $\mathbf{\Theta}_{B}$   as \cite{DRT}
\begin{eqnarray}
\mathbf{\Theta}_{3\times3}=\left[\begin{array}{ccc}
0.745 & -0.582 & -0.326\\
-0.326 & 0.745 &-0.582\\
0.582& 0.326 & 0.745
\end{array}\right]
\end{eqnarray}

Then, the codeword $\mathbf{\Phi}_{6,8,2}$ can be written as
\begin{equation}
\begin{split}
   &\mathbf{\Phi}_{6,8,2}=\\
    &\left[\begin{array}{cccccc}
                       {x}_{1,R}+j{x}_{1,I}& 0 &0 &  {x}_{7,R}+j{x}_{7,I} & 0 & 0 \\
                       {x}_{4,R}+j{x}_{4,I} & {x}_{2,R}+j{x}_{2,I} &0& {x}_{10,R}+j{x}_{10,I} & {x}_{8,R}+j{x}_{8,I}&0\\
                       0  & {x}_{5,R}+j{x}_{5,I} & {x}_{3,R}+j{x}_{3,I}& 0  & {x}_{11,R}+j{x}_{11,I} & {x}_{9,R}+j{x}_{9,I}\\
                       0 & 0 &{x}_{6,R}+j{x}_{6,I}& 0 & 0 & {x}_{12,R}+j{x}_{12,I}\\
                     -{x}_{7,R}+j{x}_{7,I} & 0 & 0 & {x}_{1,R}-j{x}_{1,I}& 0 &0\\
                      -{x}_{10,R}+j{x}_{10,I} & -{x}_{8,R}+{x}_{8,I}& 0 & {x}_{4,R}-{x}_{4,I} & {x}_{2,R}-j{x}_{2,I} &0\\
                        0  & -{x}_{11,R}+j{x}_{11,I} & -{x}_{9,R}+j{x}_{9,I} & 0  & {x}_{5,R}-j{x}_{5,I} & {x}_{3,R}-j{x}_{3,I}\\
                        0 & 0 &-{x}_{12,R}+j{x}_{12,I}& 0 & 0 &{x}_{6,R}-j{x}_{6,I}
                     \end{array}
\right].
\end{split}
 \end{equation}

The code rate for $\mathbf{\Phi}_{6,8,2}$ is $3/2$.

\section{Full Diversity of Proposed STBC with PIC Group Decoding}
In this section, we prove that our proposed STBC can obtain full diversity under PIC group decoding  and have a lower   decoding complexity compared with \cite{wei} and \cite{jisit}.

\subsection{Achieving Full Diversity with ML Decoding}

Define $\check{s}=s-\hat{s}$ as the difference between symbols $s$ and $\hat{s}$. Following the proof of \cite[Theorem 1]{jisit}, three cases should be considered separately in terms of $\check{\mathbf{s}}_R$ and $\check{\mathbf{s}}_I$ as follows

\begin{enumerate}
\item Both $\check{\mathbf{s}}_R \neq \mathbf{0}$ and $\check{\mathbf{s}}_I \neq \mathbf{0}$  \label{case1}

Consider $\check{\mathbf{s}}_R \neq \mathbf{0}$.
After some row/column permutations, a different codeword matrix $\check{\mathbf{A}}_{M,T,P}=\mathbf{A}_{M,T,P}-\hat{\mathbf{A}}_{M,T,P}$ can be written  as follows
\begin{equation}\label{ex}
  \check{\mathbf{A}}_{M,T,P}= \left[\begin{array}{cccc}
                     \check{\mathbf{T}}_1 & \mathbf{0}  & \cdots  &  \mathbf{0} \\
                     \check{\mathbf{T}}_{\frac{M}{2}+1} & \check{\mathbf{T}}_2 & \ddots  & \vdots  \\
                      \vdots & \check{\mathbf{T}}_{\frac{M}{2}+2} &  \ddots &  \mathbf{0} \\
                     \check{\mathbf{T}}_{(P-1)\frac{M}{2}+1}  & \vdots  & \ddots & \check{\mathbf{T}}_{M/2}  \\
                     \mathbf{0} & \check{\mathbf{T}}_{(P-1)\frac{M}{2}+2} & \ddots & \check{\mathbf{T}}_M\\
                     \vdots & \mathbf{0} & \ddots& \vdots\\
                     \mathbf{0} & \vdots & \ddots& \check{\mathbf{T}}_{P\frac{M}{2}}
                  \end{array}
                 \right]
 \end{equation}
 where $\mathbf{0}$ is a $2 \times 2$ matrix and
 \begin{eqnarray}
\check{\mathbf{T}}_{i} = \left[\begin{array}{cc}
    \check{x}_{i,R} & \check{x}_{\{i+\frac{MP}{2}\},R} \\
    -\check{x}_{\{i+\frac{MP}{2}\},R} & \check{x}_{i,R}\\
  \end{array}
  \right],i=1,2.
 \end{eqnarray}

 From (\ref{roxir}) and (\ref{roxii}), we deduce that there exists at least one vector $\mathbf{X}^i_p$ such that $\mathbf{X}^i_p-\hat{\mathbf{X}}^i_p\neq \mathbf{0}$, $p=1,2,\ldots,P$, because the signal space diversity is obtained from the linear transform matrix $\mathbf{\Theta}_{A}$.  Then, we have that $\check{\mathbf{A}}_{M,T,P}=\mathbf{A}_{M,T,P}-\hat{\mathbf{A}}_{M,T,P}$ is full rank, which can be proved with a  similar proof given in \cite[Theorem 1]{jisit}. Hence, $\mathbf{A}_{M,T,P}$ can guarantee full diversity with ML decoding. Likewise, it is obvious that $\mathbf{B}_{M,T,P}$ can also achieve full diversity since $\check{\mathbf{s}}_I \neq \mathbf{0}$.

Therefore, the code $\mathbf{\Phi}_{M,T,P}$ can achieve full diversity under ML decoding.

\item $\check{\mathbf{s}}_R \neq \mathbf{0}$ only \label{case2}

As we mentioned in case \ref{case1}), $\mathbf{A}_{M,T,P}$ can achieve full diversity under ML decoding if $\check{\mathbf{s}}_R \neq \mathbf{0}$. Considering $\mathbf{A}_{M,T,P}$ forms the real part in (\ref{new}), the code $\mathbf{\Phi}_{M,T,P}$ can achieve full diversity under ML decoding.

\item  $\check{\mathbf{s}}_I \neq \mathbf{0} $ only

Similar to case \ref{case2}), $\mathbf{B}_{M,T,P}$ being the imaginary part of our proposed code can achieve full diversity under ML decoding, which is sufficient to prove that $\mathbf{\Phi}_{M,T,P}$ has a property of full diversity.
\end{enumerate}

By observing all three cases, we conclude that the proposed code in (\ref{new}) can achieve full diversity under ML decoding.

\subsection{Achieving Full Diversity with PIC Group Group Decoding when $P=2$}

Compared with the PIC grouping schemes derived in \cite{wei} and \cite{jisit}, the separated linear transform of real and imaginary parts of the information symbols in the proposed code contributes to the real symbol decoding. In the following, we show the main result of the proposed STBC when a PIC group decoding with a particular grouping scheme is used at the receiver, as follows.

\begin{theorem}\label{tpic}
Consider a MIMO system with $M$ transmit antennas and $N$ receive antennas over block fading channels. The  STBC  as describe in (\ref{new}) with two diagonal layers in each submatrix is used at the transmitter. The real equivalent channel matrix is $\mathcal{H}\in \mathbb{R}^{2TN\times 2L}$. If the received signal is decoded using the PIC group decoding with the grouping scheme $I={\{I_1,I_2,\ldots,I_8\}}$, where $I_p=\{(p-1)M/2+1,\ldots,pM/2\}$ for $p=1,2,\ldots,8$, i.e., the size of each real group is equal to $M/2$, then the code $\mathbf{\Phi}_{M,T,2}$ achieves the full diversity.
\end{theorem}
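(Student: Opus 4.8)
The plan is to invoke Proposition~\ref{prop1}, which reduces the claim to verifying its two conditions for the particular grouping $I=\{I_1,\ldots,I_8\}$. The first condition---that $\mathbf{\Phi}_{M,T,2}$ achieves full diversity under ML decoding---has already been established in the preceding subsection, so the entire burden of the proof falls on the second (linear-independence) condition. Thus I would state at the outset that it suffices to show: for every $p$, no nonzero $\Delta\mathcal{A}$-combination of the equivalent-channel column vectors indexed by $I_p$ lies in the real span of the columns indexed by the complementary groups.

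First I would make the real equivalent channel matrix $\mathcal{H}\in\mathbb{R}^{2TN\times 2L}$ explicit. Using the dispersion form (\ref{new})--(\ref{C}) together with $\mathbf{Y}=\sqrt{\rho/\mu}\,\mathbf{\Phi}(\mathbf{s})\mathbf{H}+\mathbf{W}$, I would vectorize and separate real and imaginary parts as in (\ref{eqyr}) to obtain the $2L$ columns $\{\mathbf{g}_l\}$ in closed form in terms of the channel coefficients $h_{mn}$ and the rows of the diagonal-layer matrices $\mathbf{C}^1,\mathbf{C}^2$. The essential observation is that the chosen grouping $I_p=\{(p-1)M/2+1,\ldots,pM/2\}$ is aligned exactly with the diagonal layers of $\mathbf{C}^i$ and with the real/imaginary splitting: for $P=2$ the $2L=4M$ real columns split into eight blocks of $M/2$ columns, one block for each combination of the component index $i\in\{1,2\}$, the layer index $p\in\{1,2\}$, and the real/imaginary part. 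I would then record how the Alamouti-type block structure of $\mathbf{A}_{M,T,2}$ and $\mathbf{B}_{M,T,2}$ in (\ref{AB}) places the support of each block of columns in a distinct set of time slots.

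The core of the argument is condition~2. Exploiting the staircase (Toeplitz) placement of the layers in (\ref{C}), each diagonal layer is shifted by one row relative to its neighbor, so the extreme rows of each column band isolate a single layer. I would use this to show that a putative cancellation $\sum_{i\in I_p} a_i\mathbf{g}_i=\sum_{j\notin I_p} c_j\mathbf{g}_j$ forces, block by block, the coefficients $a_i$ to vanish. Concretely, I would project onto the rows where only the $p$th group has support and reduce the identity to a relation of the form $(\text{channel block})\,\mathbf{\Theta}_A\,\mathbf{a}=\mathbf{0}$ (or its $\mathbf{\Theta}_B$ counterpart arising through the transforms (\ref{roxir}) and (\ref{roxii})); since the relevant channel block is nonsingular whenever $\mathbf{H}\neq\mathbf{0}$ and $\mathbf{\Theta}_A,\mathbf{\Theta}_B$ are invertible rotations, this forces $\mathbf{a}=\mathbf{0}$, contradicting $a_i\in\Delta\mathcal{A}$ not all zero. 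Repeating the projection for each of the eight groups completes the verification.

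The step I expect to be the main obstacle is precisely this projection and bookkeeping: one must identify, for each of the eight groups, a set of rows of $\mathcal{H}$ on which the complementary columns either vanish or contribute in a controllable way, and then certify that the residual channel-dependent matrix is nonsingular for \emph{all} $\mathbf{H}\neq\mathbf{0}$. The interleaving of the two Alamouti halves in (\ref{AB}) and the coupling of real and imaginary parts through a common $\mathbf{H}$ make this tracking delicate, and it is here that the separation into real transforms $\mathbf{\Theta}_A,\mathbf{\Theta}_B$ (rather than a single complex transform as in \cite{jisit}) must be used to guarantee that the within-group combinations stay real and decouple cleanly. Once the nonsingularity of these residual blocks is secured, the theorem follows from Proposition~\ref{prop1}.
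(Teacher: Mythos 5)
Your skeleton matches the paper's: reduce to one receive antenna, invoke Proposition~\ref{prop1}, reuse the ML full-diversity result for condition~1, write the real equivalent channel $\mathcal{H}$ explicitly as eight column groups of size $M/2$ (this is exactly the paper's Lemma~1), and verify condition~2 by restricting to well-chosen rows determined by the staircase structure. However, the mechanism you propose for that restriction step would not go through, for two concrete reasons. First, there is no set of rows on which only the $p$th group has support. After the column/row permutation the eight groups sit in four bands, and on the four rows indexed by the first nonzero layer (rows $4j-3$ through $4j$, with $j$ the minimal index such that $\mathcal{F}^R_j\neq\mathbf{0}$) only $\mathbf{G}_2',\mathbf{G}_4',\mathbf{G}_6',\mathbf{G}_8'$ vanish; the groups $\mathbf{G}_3',\mathbf{G}_5',\mathbf{G}_7'$ remain present alongside $\mathbf{G}_1'$. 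The paper disposes of these survivors not by support separation but by the mutual \emph{orthogonality} of $\mathbf{G}_1',\mathbf{G}_3',\mathbf{G}_5',\mathbf{G}_7'$ on those rows, which follows from the double-Alamouti structure of (\ref{AB}) and is established in the Appendix; this orthogonality is the ingredient your plan never names and cannot do without. Second, the residual relation on those rows is not a nonsingular channel block times $\mathbf{\Theta}_A\mathbf{a}$. The restriction of $\mathbf{G}_1'$ to the isolating rows has every row proportional to the single row $\Theta_j$ of the transform (scaled by $h_{j_R}$, $h_{\{j+M/2\}_R}$, $h_{j_I}$, $h_{\{j+M/2\}_I}$), so it has rank one for every $\mathbf{H}\neq\mathbf{0}$, and all you can extract is the single scalar equation $\Theta_j\mathbf{a}=0$ for one channel-determined index $j$. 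Invertibility of $\mathbf{\Theta}_A$ does not rule this out: you need the signal-space-diversity property that every coordinate of $\mathbf{\Theta}_A\mathbf{a}$ is nonzero for every nonzero $\mathbf{a}$ drawn from the discrete difference set of the constellation, which is precisely why $\mathbf{\Theta}_A,\mathbf{\Theta}_B$ are taken from \cite{viterbo} and \cite{DRT} rather than being arbitrary rotations.

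Beyond these two points, the bookkeeping you defer as ``the main obstacle'' is where the proof actually lives: the paper splits into cases according to whether $\mathbf{h}_R$ and/or $\mathbf{h}_I$ vanish and, in the generic case, according to the relative order of the minimal indices $j$ and $l$ of the first nonzero $\mathcal{F}^R$ and $\mathcal{F}^I$ blocks, choosing different isolating rows in each subcase. Without the orthogonality argument and the signal-space-diversity property of the transforms, none of these subcases closes, so as written the proposal has a genuine gap at its central step.
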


\begin{corollary}\label{m4co}
For the proposed code with $M=4$ transmit antennas in (\ref{m4}),   real symbol pairwise ML decoding is achieved in each group, which is equivalent to single complex symbol  ML decoding.
\end{corollary}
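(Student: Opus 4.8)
The plan is to obtain Corollary~\ref{m4co} as a direct specialization of Theorem~\ref{tpic} to $M=4$, followed by a short decoding-complexity count. First I would set $M=4$, so that $M/2=2$; the grouping $I=\{I_1,\ldots,I_8\}$ of Theorem~\ref{tpic}, with $I_p=\{(p-1)M/2+1,\ldots,pM/2\}=\{2p-1,2p\}$, then partitions the $2L=16$ real columns of $\mathcal{H}$ into eight groups, each containing exactly $M/2=2$ real symbols. By Theorem~\ref{tpic} this very grouping already achieves full diversity under PIC group decoding, so after the group ZF step every group is decoded in isolation by ML over its two real symbols. Hence the joint decoding inside each group is a decoding of a \emph{pair} of real symbols, i.e., real symbol pairwise decoding.

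Next I would identify precisely which two real symbols each group carries, reading them directly from the codeword matrices $\mathbf{A}_{4,6,2}$ and $\mathbf{B}_{4,6,2}$ in (\ref{m4}). For $i=1,\ldots,4$ the pair $(x_{2(i-1)+1,R},x_{2(i-1)+2,R})^t=\mathbf{\Theta}_A(s_{2(i-1)+1,R},s_{2(i-1)+2,R})^t$ constitutes one real group, and the four analogous $\mathbf{\Theta}_B$-rotated pairs of imaginary parts $(s_{2(i-1)+1,I},s_{2(i-1)+2,I})$ form the remaining four groups. This makes explicit that the two real symbols inside any group are exactly the two components tied together by a single $2\times 2$ rotation, so their joint ML decoding cannot be decoupled further; the group size is genuinely two, confirming the ``pairwise'' claim.

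Finally I would establish the equivalence to single complex symbol decoding by a complexity count, and this is the only step requiring care. For a square QAM alphabet $\mathcal{A}$ each real (PAM) component takes $\sqrt{|\mathcal{A}|}$ values, so an exhaustive joint search over a real pair costs $(\sqrt{|\mathcal{A}|})^2=|\mathcal{A}|$ Frobenius-norm evaluations, which is identical to the cost $|\mathcal{A}|$ of an exhaustive search over one complex symbol. Substituting $l_p=2$ real symbols per group (with the real PAM alphabet of size $\sqrt{|\mathcal{A}|}$) into $\mathcal{O}=\sum_{p}|\mathcal{A}|^{l_p}$ yields total complexity $8|\mathcal{A}|$, matching eight single complex symbol decodings. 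Equivalently, the rotated pair $(s_{2i-1,R},s_{2i,R})$ may be regarded as one complex symbol drawn from a QAM-type constellation, which is the interpretation I would state as the conclusion.

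I do not expect a serious structural obstacle: once Theorem~\ref{tpic} is invoked, both the full diversity and the per-group decoupling are automatic, so the entire content of the corollary reduces to the complexity bookkeeping. The main point to get right is simply being explicit that a pair of $\sqrt{|\mathcal{A}|}$-ary real symbols and one $|\mathcal{A}|$-ary complex symbol incur the same search size, and I would therefore devote most of the write-up to that accounting rather than to any new argument.
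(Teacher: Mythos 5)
Your proposal is correct and follows essentially the same route as the paper, which states the corollary as an immediate specialization of Theorem~\ref{tpic} to $M=4$ (eight groups of $M/2=2$ real symbols each) and records the resulting complexity $8|\mathcal{A}|$ in Table~I. Your only addition is making the complexity bookkeeping explicit --- that a joint search over two $\sqrt{|\mathcal{A}|}$-ary real (PAM) components costs $(\sqrt{|\mathcal{A}|})^{2}=|\mathcal{A}|$, the same as one complex symbol --- and correctly noting that each group pairs the $\mathbf{\Theta}$-rotated real (or imaginary) parts of two distinct complex symbols rather than one literal complex symbol, which is exactly the intended reading of the paper's ``equivalent to single complex symbol ML decoding.''
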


Table I shows the comparison of PIC group decoding complexity between the new code in (\ref{m4}) and the codes in \cite{wei} and \cite{jisit}. According to this table, it is obvious that the proposed code for $M=4$  transmit antennas further reduce the decoding complexity to real symbol pairwise (i.e., single complex symbol) decoding in each PIC group.

In order to prove \emph{Theorem \ref{tpic}}, let us first introduce the following definition and lemma.

\begin{definition}\label{orde}
Let $\mathcal{V}_0, \mathcal{V}_1, \ldots, \mathcal{V}_{n}$ be $n$ groups of vectors. Vector groups $\mathcal{V}_0, \mathcal{V}_1, \ldots, \mathcal{V}_{n}$ are said to be orthogonal if for $0 \leq k \leq n$, $\mathcal{V}_{k}$ is orthogonal to the remaining vector groups $\mathcal{V}_0, \mathcal{V}_1, \ldots, \mathcal{V}_{k-1},\mathcal{V}_{k+1},\ldots, \mathcal{V}_{n}$.
\end{definition}

\begin{lemma}\label{leq}
Consider the system  described in \emph{Theorem \ref{tpic}} with $N=1$ as follows
\begin{eqnarray}\label{eqn:ye3}
\left[\begin{array}{c}
\mathbf{y}_{R}^1\\
\mathbf{y}_{R}^2\\
\mathbf{y}_{I}^1\\
\mathbf{y}_{I}^2   \end{array}
  \right]=
 \sqrt{\frac{\rho}{\mu}}\mathcal{H}
    \left[\begin{array}{c}
  \mathbf{s}^1_{1,R}\\ \mathbf{s}^1_{2,R}\\ \mathbf{s}^2_{1,R} \\ \mathbf{s}^2_{2,R}\\ \mathbf{s}^1_{1,I}\\ \mathbf{s}^1_{2,I}\\ \mathbf{s}^2_{1,I}\\ \mathbf{s}^2_{2,I}
    \end{array}
  \right]
  +\left[\begin{array}{c}
\mathbf{w}_{R}^1\\
\mathbf{w}_{R}^2\\
\mathbf{w}_{I}^1\\
\mathbf{w}_{I}^2
\end{array}\right],
\end{eqnarray}
where the $\frac{M}{2} \times 1$ vector $\mathbf{s}^i_p =\mathbf{s}^i_{p,R}+j~\mathbf{s}^i_{p,I}$  are given by (\ref{eqn:sip}) for $i=1,2$ and $p=1,2$.
The   equivalent channel matrix $\mathcal{H} \in \mathbb{R}^{2T\times 2L}$  is expressed as
\begin{eqnarray}
\mathcal{H}&=& \left[\begin{array}{cccccccc}\nonumber\label{lemmaeq}
\mathcal{H}^1_{1,R}\mathbf{\Theta}_{A}&\mathcal{H}^1_{2,R}\mathbf{\Theta}_{A}&\mathcal{H}^2_{1,R}\mathbf{\Theta}_{A}&\mathcal{H}^2_{2,R}\mathbf{\Theta}_{A}&-\mathcal{H}^1_{1,I}\mathbf{\Theta}_{B}&-\mathcal{H}^1_{2,I}\mathbf{\Theta}_{B}&-\mathcal{H}^2_{1,I}\mathbf{\Theta}_{B}&-\mathcal{H}^2_{2,I}\mathbf{\Theta}_{B}\\
\mathcal{H}^2_{1,R}\mathbf{\Theta}_{A}&\mathcal{H}^2_{2,R}\mathbf{\Theta}_{A}&-\mathcal{H}^1_{1,R}\mathbf{\Theta}_{A}&-\mathcal{H}^1_{2,R}\mathbf{\Theta}_{A}&\mathcal{H}^2_{1,I}\mathbf{\Theta}_{B}&\mathcal{H}^2_{2,I}\mathbf{\Theta}_{B}&-\mathcal{H}^1_{1,I}\mathbf{\Theta}_{B}&-\mathcal{H}^1_{2,I}\mathbf{\Theta}_{B}\\
   \mathcal{H}^1_{1,I}\mathbf{\Theta}_{A}&\mathcal{H}^1_{2,I}\mathbf{\Theta}_{A}&\mathcal{H}^2_{1,I}\mathbf{\Theta}_{A}&\mathcal{H}^2_{2,I}\mathbf{\Theta}_{A}&\mathcal{H}^1_{1,R}\mathbf{\Theta}_{B}&\mathcal{H}^1_{2,R}\mathbf{\Theta}_{B}&\mathcal{H}^2_{1,R}\mathbf{\Theta}_{B}&\mathcal{H}^2_{2,R}\mathbf{\Theta}_{B}\\
  \mathcal{H}^2_{1,I}\mathbf{\Theta}_{A}&\mathcal{H}^2_{2,I}\mathbf{\Theta}_{A}&-\mathcal{H}^1_{1,I}\mathbf{\Theta}_{A}&-\mathcal{H}^1_{2,I}\mathbf{\Theta}_{A}&-\mathcal{H}^2_{1,R}\mathbf{\Theta}_{B}&-\mathcal{H}^2_{2,R}\mathbf{\Theta}_{B}&\mathcal{H}^1_{1,R}\mathbf{\Theta}_{B}&\mathcal{H}^1_{2,R}\mathbf{\Theta}_{B}  \end{array}
\right]  \\
&=&\left[\begin{array}{cccccccc}\label{picl}
  \mathbf{G}_1 &\mathbf{G}_2 &\ldots & \mathbf{G}_8
  \end{array}
\right],
\end{eqnarray}
where
\begin{eqnarray}
\mathcal{H}_{p,R}^i=\left[\begin{array}{c}
      \mathbf{0}_{(p-1)\times (M/2)}\\
      \mathrm{diag}(\mathbf{h}_R^{i})\\
      \mathbf{0}_{(2-p)\times (M/2)}
      \end{array}
  \right], {\rm{and}~}
\mathcal{H}_{p,I}^i=\left[\begin{array}{c}
      \mathbf{0}_{(p-1)\times (M/2)}\\
      \mathrm{diag}(\mathbf{h}_I^{i})\\
      \mathbf{0}_{(2-p)\times (M/2)}
      \end{array}
  \right],
\end{eqnarray}
for $p=1,2$ and $i=1,2$. The channel coefficient vector $\mathbf{h} \in \mathbb{C}^{M\times1}$ is evenly divided into two groups with $\mathbf{h}^1=[\begin{array}{cccc}
                       h_1 & h_2 & \cdots & h_{\frac{M}{2}}
                     \end{array}
 ]^t$ and $\mathbf{h}^2=[\begin{array}{cccc}
                       h_{\frac{M}{2}+1} & h_{\frac{M}{2}+2} & \cdots & h_{M}
                     \end{array}
 ]^t$.  $h_j$ is the channel gain from the $j$th transmit antenna to the single receive antenna for $j=1,2,\cdots, M$.

\end{lemma}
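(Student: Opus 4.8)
The plan is to obtain the stated factorization by writing out the received-signal model (\ref{ry}) for a single receive antenna and then separating real and imaginary parts. With $N=1$ we have $\mathbf{H}=\mathbf{h}\in\mathbb{C}^{M\times1}$, and combining (\ref{AB}) with (\ref{C}) shows that $\mathbf{\Phi}_{M,T,2}$ is of Alamouti block type, with top block-row $[\,\mathbf{C}^1\ \ \mathbf{C}^2\,]$ and bottom block-row $[\,-(\mathbf{C}^2)^{*}\ \ (\mathbf{C}^1)^{*}\,]$. Partitioning $\mathbf{h}=[(\mathbf{h}^1)^t,(\mathbf{h}^2)^t]^t$ into its two halves and the received vector into its top and bottom halves $\mathbf{y}^1=\mathbf{y}^1_R+j\mathbf{y}^1_I$ and $\mathbf{y}^2=\mathbf{y}^2_R+j\mathbf{y}^2_I$ (and likewise for the noise), I would first record $\mathbf{y}^1=\mathbf{C}^1\mathbf{h}^1+\mathbf{C}^2\mathbf{h}^2+\mathbf{w}^1$ and $\mathbf{y}^2=-(\mathbf{C}^2)^{*}\mathbf{h}^1+(\mathbf{C}^1)^{*}\mathbf{h}^2+\mathbf{w}^2$, where the scalar $\sqrt{\rho/\mu}$ is carried along unchanged. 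This already exposes the sign-and-conjugation pattern that will distribute across the four row-blocks of $\mathcal{H}$.

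The key structural step --- and the only one I expect to be nontrivial --- is an identity describing how a diagonal-layered block of the type (\ref{C}) acts on a channel vector. For $P=2$, the $k$-th component of layer $p$ occupies position $((p-1)+k,\,k)$ of $\mathbf{C}^i$, so for any real length-$\frac{M}{2}$ vector $\mathbf{v}$ one has $\mathbf{C}^i_{(A)}\mathbf{v}=\sum_{p=1}^{2}\mathbf{S}_p\,\mathrm{diag}(\mathbf{v})\,\mathbf{X}^i_{p,(A)}$, where $(A)\in\{R,I\}$ selects the real or imaginary layer entries and $\mathbf{S}_p$ is the $\frac{T}{2}\times\frac{M}{2}$ matrix that embeds a vector starting at row $p$. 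The product $\mathbf{S}_p\,\mathrm{diag}(\mathbf{h}^j_{(B)})$ is precisely the zero-padded diagonal matrix $\mathcal{H}^j_{p,(B)}$ of the lemma, with $(p-1)$ zero rows above and $(2-p)$ below. Hence $\mathbf{C}^i_{(A)}\mathbf{h}^j_{(B)}=\sum_{p}\mathcal{H}^j_{p,(B)}\mathbf{X}^i_{p,(A)}$: the channel half $j$ together with its real/imaginary flag $(B)$ fixes the superscript and subscript of $\mathcal{H}$, while the code half $i$ and its flag $(A)$ fix those of $\mathbf{X}$. Getting the zero-padding offsets matched correctly to the layer index is the crux; everything after it is bookkeeping.

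Finally I would expand the four real subvectors by multiplying out the two complex relations above. For instance $\mathbf{y}^1_R=\mathbf{C}^1_R\mathbf{h}^1_R-\mathbf{C}^1_I\mathbf{h}^1_I+\mathbf{C}^2_R\mathbf{h}^2_R-\mathbf{C}^2_I\mathbf{h}^2_I+\mathbf{w}^1_R$, and analogously for $\mathbf{y}^2_R,\mathbf{y}^1_I,\mathbf{y}^2_I$. Applying the identity term by term turns every product into a sum of $\mathcal{H}^j_{p,(B)}\mathbf{X}^i_{p,(A)}$ blocks, and substituting the linear transforms $\mathbf{X}^i_{p,R}=\mathbf{\Theta}_A\mathbf{s}^i_{p,R}$ and $\mathbf{X}^i_{p,I}=\mathbf{\Theta}_B\mathbf{s}^i_{p,I}$ from (\ref{roxir})--(\ref{roxii}) replaces each $\mathbf{X}$ by $\mathbf{\Theta}_A\mathbf{s}$ or $\mathbf{\Theta}_B\mathbf{s}$. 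Collecting the coefficient of each of the eight real subvectors $\mathbf{s}^1_{1,R},\ldots,\mathbf{s}^2_{2,I}$ and stacking the row-blocks in the order $\mathbf{y}^1_R,\mathbf{y}^2_R,\mathbf{y}^1_I,\mathbf{y}^2_I$ then reproduces $\mathcal{H}$ in (\ref{lemmaeq}) block by block --- including the leading minus signs on the four $\mathbf{\Theta}_B$ blocks of the first row-block and the interchange of the superscripts $1\leftrightarrow2$ in the second and fourth row-blocks inherited from the Alamouti structure --- which establishes the lemma.
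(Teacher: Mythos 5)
Your proposal is correct and follows essentially the same route as the paper's appendix proof: split the received signal into the four real subvectors, use the layer-wise identity $\mathbf{C}^i_{(\cdot)}\mathbf{h}^j_{(\cdot)}=\sum_{p}\mathcal{H}^j_{p,(\cdot)}\mathbf{X}^i_{p,(\cdot)}$ (the paper's $\mathbf{C}^i=\sum_p\mathbf{C}^i_p$ with the diagonal-swap step), substitute $\mathbf{X}^i_{p,R}=\mathbf{\Theta}_A\mathbf{s}^i_{p,R}$ and $\mathbf{X}^i_{p,I}=\mathbf{\Theta}_B\mathbf{s}^i_{p,I}$, and stack. Phrasing the first step via the complex Alamouti block form $[\mathbf{C}^1\ \mathbf{C}^2;\ -(\mathbf{C}^2)^{*}\ (\mathbf{C}^1)^{*}]$ rather than the real matrices $\mathbf{A}$ and $\mathbf{B}$ is only a cosmetic reordering of the same computation.
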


A proof of \emph{Lemma 1} is given in Appendix.

\begin{proof}[Proof of Theorem \ref{tpic}]


Note that \cite[Corollary 1]{guoxia} proves that the full diversity conditions only need to be proved for one receive antenna case. Thus, we only consider the MISO system model (i.e. $N=1$).

First, after some column/row permutations, (\ref{lemmaeq}) can be rewritten as
\begin{eqnarray}\label{eqhp1}
\mathcal{H}^{'} &=& \left[\begin{array}{cccccccc}
                     [\mathbf{G}_1^{'}  & \mathbf{G}_3^{'} ]& [\mathbf{G}_5^{'} & \mathbf{G}_7^{'}]& [\mathbf{G}_2^{'} & \mathbf{G}_4^{'}]& [\mathbf{G}_6^{'} & \mathbf{G}_8^{'}]
                   \end{array}
\right],\nonumber\\
&=&\left[\begin{array}{cccc}
                       \mathcal{F}^{R}_1 & -\mathcal{F}^{I}_1 & \mathbf{0}_{2\times M} & \mathbf{0}_{2\times M} \\
                        \mathcal{F}^{I}_1 & \mathcal{F}^{R}_1  & \mathbf{0}_{2\times M} & \mathbf{0}_{2\times M} \\
                       \mathcal{F}^{R}_2 & -\mathcal{F}^{I}_2  & \mathcal{F}^{R}_1 & -\mathcal{F}^{I}_1  \\
                        \mathcal{F}^{I}_2 & \mathcal{F}^{R}_2 & \mathcal{F}^{I}_1 & \mathcal{F}^{R}_1\\
                         \vdots            &   \vdots  &\mathcal{F}^{R}_2 & -\mathcal{F}^{I}_2\\
                        \vdots            &   \vdots         & \mathcal{F}^{I}_2 & \mathcal{F}^{R}_2 \\
                        \mathcal{F}^{R}_{\frac{M}{2}} & -\mathcal{F}^{I}_{\frac{M}{2}} &\vdots  &\vdots  \\
                        \mathcal{F}^{I}_{\frac{M}{2}} & \mathcal{F}^{R}_{\frac{M}{2}} &\vdots   &   \vdots   \\
                        \mathbf{0}_{2\times M} & \mathbf{0}_{2\times M} &\mathcal{F}^{R}_{\frac{M}{2}} & -\mathcal{F}^{I}_{\frac{M}{2}}\\
                        \mathbf{0}_{2\times M} & \mathbf{0}_{2\times M} & \mathcal{F}^{I}_{\frac{M}{2}} & \mathcal{F}^{R}_{\frac{M}{2}}
                      \end{array}
\right],
\end{eqnarray}
where  both $\mathcal{F}^{R}_j$ and $\mathcal{F}^{I}_j$ are $2\times M$ real matrix given by
\begin{eqnarray}
\mathcal{F}^R_j&=&\left[\begin{array}{cc}\label{FR}
                      \mathbf{f}_{j,j_R} & \mathbf{f}_{j,{\{j+\frac{M}{2}\}}_R} \\
                     \mathbf{f}_{j,{\{j+\frac{M}{2}\}}_R} &   -\mathbf{f}_{j,j_R}
                   \end{array}
\right],\\
\mathcal{F}^I_j&=&\left[\begin{array}{cc}\label{FI}
                      \mathbf{f}_{j,j_I} & \mathbf{f}_{j,{\{j+\frac{M}{2}\}}_I} \\
                     -\mathbf{f}_{j,{\{j+\frac{M}{2}\}}_I} &   \mathbf{f}_{j,j_I}
                   \end{array}
\right],
\end{eqnarray}
for $j=1,2,\cdots,\frac{M}{2}$. $\mathbf{f}_{i,j_{(.)}}= \Theta_{i} h_{j_{(.)}}$ is a $1 \times \frac{M}{2}$ real vector with $\Theta_i$ being the $i$th row of the linear transform matrix $\mathbf{\Theta}$ for $i=1,2,\ldots,\frac{M}{2}$ and with $h_{j_R}$ and $h_{j_I}$ being the real and imaginary part of $h_j$ for $j=1,2,\ldots,M$, respectively. It is worthwhile to mention that from  (\ref{lemmaeq}), $\mathcal{F}^R_j$ in  $[\mathbf{G}_1^{'}, ~ \mathbf{G}_3^{'} ]$ and $  [\mathbf{G}_2^{'},~  \mathbf{G}_4^{'}]$ are related to $\mathbf{\Theta}_A$, while  $\mathcal{F}^I_j$ in $[\mathbf{G}_5^{'},~ \mathbf{G}_7^{'}]$ and $[\mathbf{G}_6^{'},~  \mathbf{G}_8^{'}]$ are associated with $\mathbf{\Theta}_B$. In Appendix, it is shown that the orthogonality between each groups is irrelevant to the linear transform matrices  $\mathbf{\Theta}_A$ and $\mathbf{\Theta}_B$. Therefore, for simplicity  $\mathbf{\Theta}$ is used for both
$\mathbf{\Theta}_A$ and $\mathbf{\Theta}_B$.

Next, we   prove that any non-zero linear combination of the vectors in $\mathbf{G}_1^{'}$  over $\Delta \mathcal{A}$ does not belong to the space linearly spanned by all the vectors in the vector groups $\mathbf{G}_2^{'},~\mathbf{G}_3^{'},~\ldots,~\mathbf{G}_8^{'}$. for any $\mathbf{h}\neq 0$, i.e.,
\begin{eqnarray}\label{ture1}
\sum_{\forall  \mathbf{g}_{i}\subset \mathbf{G}_1^{'}} a_i \mathbf{g}_{i} \neq \sum_{\forall \mathbf{g}_{j}\subset \{\mathbf{G}_2^{'},~\mathbf{G}_3^{'},~\ldots,~\mathbf{G}_8^{'} \}}c_j \mathbf{g}_{j}, ~~~~a_i\in \Delta \mathcal{A},{~\rm{not\, all\, zero}}, ~c_j\in \mathbb{C}.
\end{eqnarray}
where $\mathbf{g}_{i}$ is a column vector.

%

For any  nonzero $\mathbf{h}=\mathbf{h}_R+j\mathbf{h}_I$, we have following three cases.
\begin{description}
  \item[A)]  if  $\mathbf{h}_R\neq \mathbf{0}$ and $\mathbf{h}_I\neq \mathbf{0}$, then
it must exist a minimum index $j$ ($1\leq j\leq M/2$) such that $\mathcal{F}^R_j$ is nonzero and a minimum index $l$ ($1\leq l\leq M/2$)  such that     $\mathcal{F}^I_l$  is nonzero. Therefore, $\mathcal{F}^{R}_1,\cdots, \mathcal{F}^R_{j-1}$  must be all zeros and  $\mathcal{F}^{I}_1,\cdots, \mathcal{F}^I_{l-1}$ must be all zeros, too.
  \item[B)]   if  $\mathbf{h}_R\neq \mathbf{0}$ and $\mathbf{h}_I= \mathbf{0}$, then
it must exist a minimum index $j$ ($1\leq j\leq M/2$) such that $\mathcal{F}^R_j$ is nonzero. Therefore, $\mathcal{F}^{R}_1,\cdots, \mathcal{F}^R_{j-1}$  must be all zeros and $\mathcal{F}^{I}_1,\cdots, \mathcal{F}^I_{M/2}$ must be all zeros, too.
  \item[C)]    $\mathbf{h}_I\neq \mathbf{0}$ and $\mathbf{h}_R= \mathbf{0}$, then
it must exist a minimum index $l$ ($1\leq l\leq M/2$) such that $\mathcal{F}^I_l$ is nonzero. Therefore, $\mathcal{F}^{I}_1,\cdots, \mathcal{F}^I_{l-1}$ must be all zeros and $\mathcal{F}^{R}_1,\cdots, \mathcal{F}^R_{M/2}$ must be all zeros, too.
\end{description}

Next, we first focus on the case of A).
The proof is presented in terms of $j$ and $l$.

\begin{description}
\item[A1)] $j=l$

In this case, (\ref{eqhp1}) can be expressed as
\begin{eqnarray}\label{eqhp2}
\mathcal{H}^{'}
&=&\left[\begin{array}{cccc}
                       \mathbf{0} & \mathbf{0} & \mathbf{0} & \mathbf{0}\\
                       \vdots & \vdots & \vdots & \vdots\\
                       \mathbf{0}& \mathbf{0} & \mathbf{0}& \mathbf{0}\\
                        \mathcal{F}^{R}_j &  -\mathcal{F}^{I}_j &\mathbf{0} &\mathbf{0}\\
                         \mathcal{F}^{I}_j &  \mathcal{F}^{R}_j & \mathbf{0} &\mathbf{0}\\
                             \vdots & \vdots &   \mathcal{F}^{R}_j &  -\mathcal{F}^{I}_j\\
                        \vdots & \vdots & \mathcal{F}^{I}_j &  \mathcal{F}^{R}_j \\
                         \vdots & \vdots & \vdots & \vdots\\
                        \mathcal{F}^{R}_{\frac{M}{2}} & -\mathcal{F}^{I}_{\frac{M}{2}} &\vdots &\vdots \\
                        \mathcal{F}^{I}_{\frac{M}{2}} & \mathcal{F}^{R}_{\frac{M}{2}} &\vdots &\vdots \\
                          \mathbf{0} & \mathbf{0} &\mathcal{F}^R_{\frac{M}{2}}& -\mathcal{F}^I_{\frac{M}{2}}\\
                           \mathbf{0} & \mathbf{0} &\mathcal{F}^I_{\frac{M}{2}}& \mathcal{F}^R_{\frac{M}{2}}\\
                      \end{array}
\right].
\end{eqnarray}
where $\mathbf{0}=\mathbf{0}_{2\times M}$.
%
%
%

By observing the $(4j-3)$th row to the $(4j)$th row in (\ref{eqhp2}), the vector groups $\mathbf{G}_2^{'},~\mathbf{G}_4^{'},~\mathbf{G}_6^{'},~\mathbf{G}_8^{'}$ are all zeros, and $\mathbf{G}_1^{'}$ is orthogonal to  the vector groups $\mathbf{G}_3^{'},~\mathbf{G}_5^{'},~\mathbf{G}_7^{'}$.  Thus, it is obvious that  in these four rows, any non-zero linear combination of the vectors in $\mathbf{G}_1^{'}$  over $\Delta \mathcal{A}$ does not belong to the space linearly spanned by all the vectors in the vector groups $\mathbf{G}_2^{'},~\mathbf{G}_4^{'},~\mathbf{G}_6^{'},~\mathbf{G}_8^{'}$.

Furthermore, according to \emph{Definition \ref{orde}}, the vector groups $\mathbf{G}_1^{'}, ~\mathbf{G}_3^{'}, ~\mathbf{G}_5^{'}, ~\mathbf{G}_7^{'}$ are orthogonal in these four rows. Consequently,  in these four rows, any non-zero linear combination of the vectors in $\mathbf{G}_1^{'}$  over $\Delta \mathcal{A}$ does not belong to the space linearly spanned by all the vectors in the vector groups $\mathbf{G}_2^{'},~\mathbf{G}_3^{'},~\ldots,~\mathbf{G}_8^{'}$. Considering all rows in (\ref{eqhp2}), any non-zero linear combination of the vectors in $\mathbf{G}_1^{'}$  over $\Delta \mathcal{A}$ does not belong to the space linearly spanned by all the vectors in the vector groups $\mathbf{G}_2^{'},~\mathbf{G}_3^{'},~\ldots,~\mathbf{G}_8^{'}$.


\item[A2)] $j>l$

In this case, (\ref{eqhp1}) can be expressed as
\begin{eqnarray}\label{eqhp3}
\mathcal{H}^{'}
&=&\left[\begin{array}{cccc}
                       \mathbf{0} & \mathbf{0} & \mathbf{0} & \mathbf{0}\\
                       \vdots & \vdots & \vdots & \vdots\\
                       \mathbf{0}& \mathbf{0} & \vdots& \vdots\\
                        \mathbf{0} &  -\mathcal{F}^{I}_l &\mathbf{0} &\mathbf{0}\\
                         \mathcal{F}^{I}_l &  \mathbf{0} &\mathbf{0} &\mathbf{0}\\
                       \vdots & \vdots & \mathbf{0}& -\mathcal{F}^I_l \\
                        \vdots& \vdots& \mathcal{F}^I_l& \mathbf{0} \\
                        \vdots & \vdots & \vdots & \vdots\\
                        \mathbf{0}& -\mathcal{F}^{I}_{j-1}& \vdots & \vdots\\
                        \mathcal{F}^{I}_{j-1} & \mathbf{0}& \vdots & \vdots\\
                        \mathcal{F}^{R}_j & -\mathcal{F}^{I}_j & \mathbf{0}& -\mathcal{F}^{I}_{j-1}\\
                        \mathcal{F}^{I}_j &  \mathcal{F}^{R}_j & \mathcal{F}^{I}_{j-1} & \mathbf{0}\\
                        \vdots & \vdots & \mathcal{F}^{R}_j & -\mathcal{F}^{I}_j\\
                        \vdots & \vdots & \mathcal{F}^{I}_j &  \mathcal{F}^{R}_j \\
                             \vdots & \vdots & \vdots & \vdots\\
                        \mathcal{F}^{R}_{\frac{M}{2}} & -\mathcal{F}^{I}_{\frac{M}{2}} &\vdots &\vdots \\
                         \mathcal{F}^{I}_{\frac{M}{2}} & \mathcal{F}^{R}_{\frac{M}{2}} &\vdots &\vdots \\
                         \mathbf{0} & \mathbf{0} &\mathcal{F}^{R}_{\frac{M}{2}} & -\mathcal{F}^{I}_{\frac{M}{2}}\\
                        \mathbf{0} & \mathbf{0} &\mathcal{F}^{I}_{\frac{M}{2}} & \mathcal{F}^{R}_{\frac{M}{2}}
                      \end{array}
\right].
\end{eqnarray}

%


It is seen that from the $(4l-3)$th row to the $(4l)$th row in (\ref{eqhp3}) the  groups $\mathbf{G}_2^{'},~\mathbf{G}_4^{'},~\mathbf{G}_6^{'},~\mathbf{G}_8^{'}$ are all zeros. Similarly, we have that in these four rows, any non-zero linear combination of the vectors in $\mathbf{G}_1^{'}$  over $\Delta \mathcal{A}$ does not belong to the space linearly spanned by all the vectors in the vector groups $\mathbf{G}_2^{'},~\mathbf{G}_4^{'},~\mathbf{G}_6^{'},~\mathbf{G}_8^{'}$.  Additionally, the vector groups $\mathbf{G}_1^{'}, ~\mathbf{G}_3^{'}, ~\mathbf{G}_5^{'}, ~\mathbf{G}_7^{'}$ are orthogonal. Similar to case A1), we have that any non-zero linear combination of the vectors in $\mathbf{G}_1^{'}$  over $\Delta \mathcal{A}$ does not belong to the space linearly spanned by all the vectors in the vector groups $\mathbf{G}_2^{'},~\mathbf{G}_3^{'},\ldots,~\mathbf{G}_8^{'}$.

\item[A3)] $j<l$

In this case, (\ref{eqhp1}) can be expressed as

\begin{eqnarray}\label{eqhp4}
\mathcal{H}^{'}
&=&\left[\begin{array}{cccc}
                       \mathbf{0} & \mathbf{0} & \mathbf{0} & \mathbf{0}\\
                       \vdots & \vdots & \vdots & \vdots\\
                       \mathbf{0}& \mathbf{0} & \vdots& \vdots\\
                        \mathcal{F}^{R}_j &  \mathbf{0} &\mathbf{0} &\mathbf{0}\\
                          \mathbf{0} &  \mathcal{F}^{R}_j &\mathbf{0} &\mathbf{0}\\
                          \vdots & \vdots & \mathcal{F}^{R}_j& \mathbf{0} \\
                            \vdots & \vdots & \mathbf{0} &  \mathcal{F}^{R}_j \\
                         \vdots & \vdots& \vdots& \vdots\\
                          \mathcal{F}^{R}_{l-1} &  \mathbf{0} &\vdots &\vdots\\
                          \mathbf{0} & \mathcal{F}^{R}_{l-1} & \vdots &\vdots\\
                       \mathcal{F}^{R}_{l} &-\mathcal{F}^I_l & \mathcal{F}^{R}_{l-1} &  \mathbf{0} \\
                        \mathcal{F}^{I}_{l} &\mathcal{F}^R_l &  \mathbf{0} & \mathcal{F}^{R}_{l-1} \\
                         \vdots & \vdots &  \mathcal{F}^{R}_{l} &-\mathcal{F}^I_l\\
                          \vdots & \vdots & \mathcal{F}^{I}_{l} &\mathcal{F}^R_l \\
                           \vdots & \vdots &  \vdots & \vdots \\
                        \mathcal{F}^{R}_{\frac{M}{2}} & -\mathcal{F}^{I}_{\frac{M}{2}} &\vdots &\vdots \\
                        \mathcal{F}^{I}_{\frac{M}{2}} & \mathcal{F}^{R}_{\frac{M}{2}} &\vdots &\vdots \\
                        \mathbf{0} & \mathbf{0} &\mathcal{F}^R_{\frac{M}{2}}& -\mathcal{F}^I_{\frac{M}{2}}\\
                         \mathbf{0} & \mathbf{0} &\mathcal{F}^I_{\frac{M}{2}}& \mathcal{F}^R_{\frac{M}{2}}
                      \end{array}
\right].
\end{eqnarray}

%

As for this case, the vector groups $\mathbf{G}_2^{'},~\mathbf{G}_4^{'},~\mathbf{G}_6^{'},~\mathbf{G}_8^{'}$ are all zeros from the $(4j-3)$th row to the $(4j)$th row, and the vector groups $\mathbf{G}_1^{'}, ~\mathbf{G}_3^{'}, ~\mathbf{G}_5^{'}, ~\mathbf{G}_7^{'}$ are orthogonal in (\ref{eqhp4}). Similar to the proof for case A1), we have that any non-zero linear combination of the vectors in $\mathbf{G}_1^{'}$  over $\Delta \mathcal{A}$ does not belong to the space linearly spanned by all the remaining vectors in $\mathbf{G}_2^{'},~\mathbf{G}_3^{'},\ldots,~\mathbf{G}_8^{'}$.
\end{description}


To summarize all the cases A1)-A3),   we then conclude that for $\mathbf{h}_R\neq \mathbf{0}$ and $\mathbf{h}_I\neq \mathbf{0}$ any non-zero linear combination of the vectors in $\mathbf{G}_1^{'}$  over $\Delta \mathcal{A}$ does not belong to the space linearly spanned by all the vectors in the vector groups $\mathbf{G}_2^{'},~\mathbf{G}_3^{'},~\ldots,~\mathbf{G}_8^{'}$.

If the case B) occurs, i.e., $\mathbf{h}_R\neq \mathbf{0}$ and $\mathbf{h}_I= \mathbf{0}$, then (\ref{eqhp1}) can be written as a similar form to (\ref{eqhp4}) by replacing $\mathcal{F}^{I}_{l}$ by $\mathbf{0}$ for all $l$. The proof is the same as that of case A3).

 If the case C) occurs, i.e., $\mathbf{h}_I\neq \mathbf{0}$ and $\mathbf{h}_R= \mathbf{0}$,  then (\ref{eqhp1}) can be written as a similar form to (\ref{eqhp3}) by replacing $\mathcal{F}^{R}_{j}$ by $\mathbf{0}$ for all $j$. The proof is the same as that of case A2).

Therefore, we have proved that for any $\mathbf{h}\neq \mathbf{0}$ any non-zero linear combination of the vectors in $\mathbf{G}_1^{'}$  over $\Delta \mathcal{A}$ does not belong to the space linearly spanned by all the vectors in the vector groups $\mathbf{G}_2^{'},~\mathbf{G}_3^{'},~\ldots,~\mathbf{G}_8^{'}$.

Similarly, we can prove that any non-zero linear combination of the vectors in $\mathbf{G}_p^{'}$  over $\Delta \mathcal{A}$ does not belong to the space linearly spanned by all the vectors in the remaining vector groups, for $p=2, 3, \dots, 8$.

Note that $\mathbf{G}_p^{'}$ is a row permutation of $\mathbf{G}_p$ for $p=1,2,\ldots,8$, respectively. We prove that any non-zero linear combination of the vectors in $\mathbf{G}_p$  over $\Delta \mathcal{A}$ does not belong to the space linearly spanned by all the vectors in the remaining vector groups, for $p=1,2,\ldots,8$.

According to  \emph{proposition 1}, the proof of \emph{Theorem \ref{tpic}} is completed.
\end{proof}

\subsection{Achieving Full Diversity with PIC-SIC Group Decoding for Arbitrary Layers $P$}
In the preceding discussion, the new code $\mathbf{\Phi}_{M,T,P}$ in (\ref{new}) is proved to achieve the full diversity under PIC group decoding when $P=2$ only.
 In the following, we will further show that $\mathbf{\Phi}_{M,T,P}$ with any value $P$ can obtain full diversity  under PIC-SIC group decoding \cite{guoxia}.

\begin{theorem}
 Consider a MIMO system with $M$ transmit antennas and $N$ receive antennas over block fading channels. The  STBC  as described in (\ref{new}) with $P$ diagonal layers is used at the transmitter. The equivalent channel matrix is $\mathcal{H}\in \mathcal{C}^{2TN\times LP}$. If the received signal is decoded using the PIC-SIC group decoding with the grouping scheme $\mathcal{I}={\{\mathcal{I}_1,\cdots,  \mathcal{I}_{4P}\}}$ and with the sequential order, where $\mathcal{I}_p=\{(p-1)M/2+1,\ldots,pM/2\}$ for $p=1,2,\cdots,4P$, i.e., the size of each real group is equal to  $M/2$, then the code $\mathbf{\mathbf{\Phi}}_{M,T,P}$ achieves the full diversity. The code rate of the full-diversity STBC can be up to $M/2$ symbols per channel use.

 The proof is similar to that of \emph{Theorem 1}.  Note that $\mathcal{H}$ for the code $\mathbf{\Phi}_{M,T,P}$ in \emph{Lemma 1} can be written as an alternative form similar to the one in (\ref{eqhp1}) except the expansion of column dimensions. With aid of \emph{proposition 2}, it is simple to follow the  proof for the case of $P=2$ in Section IV-B to prove \emph{Theorem 2}.
The detailed proof is omitted.

\end{theorem}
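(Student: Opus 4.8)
The plan is to reduce the PIC-SIC full-diversity claim to the two conditions of Proposition \ref{prop2} and then recycle, almost verbatim, the orthogonality machinery already developed in the proof of Theorem \ref{tpic}. The first condition, the full-rank (ML full-diversity) criterion, is already established in Section IV-A for arbitrary $P$: the three-case analysis in terms of $\check{\mathbf{s}}_R$ and $\check{\mathbf{s}}_I$ did not assume $P=2$, so that part transfers directly. Thus the real work is the second condition: at each SIC stage, for the current to-be-decoded group $\mathbf{G}_{q_1}$, any nonzero linear combination over $\Delta\mathcal{A}$ of its vectors must not lie in the span of the vectors of the \emph{yet-undecoded} groups only. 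This is weaker than the PIC condition of Proposition \ref{prop1}, since already-decoded groups are excluded from the spanning set, which is precisely why $P>2$ becomes tractable under SIC where it failed under plain PIC.

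The first concrete step is to write out the equivalent channel matrix $\mathcal{H}$ for general $P$ in the permuted block form analogous to (\ref{eqhp1}). As the statement already indicates, this matrix has the same $\mathcal{F}^R_j,\mathcal{F}^I_j$ block structure as in the $P=2$ case, but with the column dimension expanded from $8$ groups to $4P$ groups; the diagonal-layer structure of $\mathbf{C}^i$ in (\ref{C}) means the Toeplitz-type staircase pattern simply repeats with $P$ layers. I would fix a sequential decoding order $\mathbf{G}_1,\mathbf{G}_2,\ldots,\mathbf{G}_{4P}$ matching the grouping $\mathcal{I}_p$, and verify the second condition stage by stage. At the stage where $\mathbf{G}_{q_1}$ is current, the remaining undecoded groups are exactly those with larger index; because of the lower-triangular staircase layout, the rows that isolate $\mathbf{G}_{q_1}$ are those where all higher-indexed real groups vanish, leaving $\mathbf{G}_{q_1}$ orthogonal (in the sense of Definition \ref{orde}) to the surviving higher-indexed groups.

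The key reusable fact is that the three sub-cases A1)--A3) (and the degenerate cases B) and C)) of the Theorem \ref{tpic} proof were driven entirely by locating the minimum indices $j,l$ for which $\mathcal{F}^R_j,\mathcal{F}^I_l$ are nonzero and exploiting the block-orthogonality of $\{\mathbf{G}_1',\mathbf{G}_3',\mathbf{G}_5',\mathbf{G}_7'\}$ in those critical rows. The same minimum-index argument applies for each of the $4P$ groups once we restrict the spanning set to undecoded groups: I would check that in the relevant rows the current group is orthogonal to all later groups, which is what Proposition \ref{prop2} requires, and that the already-decoded groups — which might otherwise destroy the plain-PIC orthogonality for $P>2$ — are harmlessly absent from the span. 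The final step is to translate the conclusion for the permuted matrix $\mathcal{H}'$ back to $\mathcal{H}$ via the row permutation, exactly as in the last paragraph of the Theorem \ref{tpic} proof, and to invoke Proposition \ref{prop2}.

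The main obstacle I anticipate is bookkeeping rather than a genuine mathematical difficulty: verifying that for \emph{every} stage $q_1$ and for all three channel cases the critical isolating rows exist and deliver the required orthogonality, as $P$ grows the number of stages and the interleaving of $\mathcal{F}^R$ and $\mathcal{F}^I$ blocks across layers becomes notationally heavy. I would therefore argue by the generic pattern — fixing an arbitrary stage and the minimum nonzero-block indices — rather than enumerating all $4P$ stages, and lean on the SIC assumption to discard already-decoded groups, which is the structural reason the argument closes for arbitrary $P$. This is exactly the reduction the paper signals when it says the proof follows that of Theorem \ref{tpic} with expanded column dimensions, so I would present it as such and omit the repetitive block computations.
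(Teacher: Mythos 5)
Your proposal is correct and follows essentially the same route the paper indicates: the paper itself only sketches this proof, stating that one writes $\mathcal{H}$ in the permuted block form of (\ref{eqhp1}) with the column dimension expanded to $4P$ groups and then repeats the $P=2$ argument of Section IV-B, invoking Proposition \ref{prop2} so that only the yet-undecoded groups enter the spanning set at each stage. Your added observations — that the ML full-rank part in Section IV-A already covers arbitrary $P$, and that the minimum-index/orthogonality argument should be run generically at an arbitrary stage rather than enumerated over all $4P$ stages — are faithful elaborations of that same sketch rather than a different approach.
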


\section{Simulation Results}
In this section, we present some simulation results for four transmit antennas and four receive antennas. In all simulations, the channel model follows that described in section \ref{model}. In Fig. \ref{fig:1}, four kinds of STBCs are compared: Guo-Xia's code proposed in \cite[Eq. (40)]{guoxia}, $\mathbf{C}_{4,6,2}$ in \cite[Eq. (29)]{wei}, $\mathbf{B}_{4,6,2}$ in \cite[Eq. (37)]{jisit} and the new code $\mathbf{\Phi}_{4,6,2}$ given in (\ref{M4}). Note that all the codes presented in Fig. \ref{fig:1} have the same rate of $4/3$, and 64-QAM constellation is used so that we keep the same bandwidth efficiency of 8 bps/Hz for each code.

Fig. \ref{fig:1} shows the bit error rate (BER) for four codes based on PIC group decoding. Firstly, as expected Guo-Xia's code, $\mathbf{C}_{4,6,2}$, $\mathbf{B}_{4,6,2}$ and the new code $\mathbf{\Phi}_{4,6,2}$ can achieve full diversity at high SNR. Then, one can observe that $\mathbf{\Phi}_{4,6,2}$ has a very similar performance to $\mathbf{B}_{4,6,2}$ and  Guo-Xia's code since we use the same real linear transform  matrix for the case $M=4$. However, compared with $\mathbf{B}_{4,6,2}$ and Guo-Xia's code, the code $\mathbf{\Phi}_{4,6,2}$ further increases the number of PIC groups and allows two real symbols (i.e. single complex symbol) to be decoded in each PIC group without performance loss.

In Fig. \ref{fig:2}, CIOD of rate 1 in \cite[Eq. (85)]{Ra} and  QOSTBC of rate 1 in \cite[Eq. (39)]{quan} with ML decoding are compared with the code
$\mathbf{\Phi}_{4,6,2}$ with PIC group decoding. In order to make a fair performance comparison, the symbols are chosen from a 256QAM signal set for CIOD and QOSTBC, and 64QAM for the code
$\mathbf{\Phi}_{4,6,2}$. Thus, the code $\mathbf{\Phi}_{4,6,2}$ has the same bandwidth efficiency with CIOD and QOSTBC at $8$ bps/Hz. Note that QOSTBC with optimal transformation has a very similar performance to CIOD. Moreover,  one observe that the code $\mathbf{\Phi}_{4,6,2}$ outperforms both CIOD and QOSTBC by $4$ dB. As for this case, the decoding complexity of new code (real symbols pair-wise) is equivalent to that of QOSTBC (real symbols pairwise ML decoding ) and CIOD (single complex symbol ML decoding).

Fig. \ref{fig:3} presents the performance comparison between the code $\mathbf{B}_{4,6,3}$ in \cite{jisit} and the proposed code $\mathbf{\Phi}_{4,6,3}$ with PIC and PIC-SIC group decoding, respectively. Here, 64QAM is used to keep the same bandwidth efficiency of $9$ bps/Hz. It can be observed that $\mathbf{B}_{4,6,3}$ has a very similar performance to $\mathbf{\Phi}_{4,6,3}$ under both PIC and PIC-SIC group decoding. In addition, it is shown that both $\mathbf{B}_{4,6,3}$ and $\mathbf{\Phi}_{4,6,3}$ can achieve full diversity under PIC-SIC group decoding, but lose full diversity when PIC group decoding is employed which is validated by Theorem 1.

\section{Conclusion}
In this paper, we proposed a systematic design of STBC that can achieve full diversity with the PIC group decoding. By coding the real and imaginary parts of the complex symbols vector independently, the proposed code has a reduced PIC group decoding complexity, which is equivalent to a joint decoding of $M/2$ real symbols for $M$ transmit antennas. The full diversity of the proposed STBC with $P$ diagonal layers was proved for PIC group decoding with $P=2$ and PIC-SIC group decoding with any $P$, respectively. It is worthwhile to mention that  for $4$ transmit antennas the code admits real symbol pairwise decoding and the code rate is  $4/3$.  Simulation results show that our proposed code can achieve full diversity with a lower decoding complexity than other existing codes.

\section*{Appendix   - Proof of Lemma 1}

Consider the system described in \emph{Theorem \ref{tpic}} with receive antenna $N=1$. According to the system model given in (\ref{ry}), the matrix form of $\mathbf{y}$ is represented as
\begin{equation}\label{eqn:Y}
\mathbf{y}=\sqrt{\frac{\rho}{\mu}}(\mathbf{A}+j\mathbf{B})(\mathbf{h}_R+j\mathbf{h}_I)+\mathbf{w}.
\end{equation}

With the expansion of (\ref{eqn:Y}), we rewrite $\mathbf{y}=\mathbf{y}_R+j~\mathbf{y}_I$ as a matrix form
\begin{eqnarray}
\mathbf{y}_{R}\label{eqn:yr1}
&=&\sqrt{\frac{\rho}{\mu}}(\mathbf{A}\mathbf{h}_R-\mathbf{B}\mathbf{h}_I)+\mathbf{w}_R,\\
\mathbf{y}_{I}\label{eqn:yi1}
&=&\sqrt{\frac{\rho}{\mu}}(\mathbf{A}\mathbf{h}_I+\mathbf{B}\mathbf{h}_R)+\mathbf{w}_I.
\end{eqnarray}

We substitute the codeword matrices (\ref{AB}) into (\ref{eqn:yr1}) and(\ref{eqn:yi1}). Then, we can obtain
\begin{equation}\label{eqn:yr}
\begin{split}
&\mathbf{y}_R=\left[\begin{array}{c}
\mathbf{y}_{R}^1\\
\mathbf{y}_{R}^2
\end{array}
\right]\\
=
&\sqrt{\frac{\rho}{\mu}}
\left(\left[\begin{array}{cc}
\mathbf{C}_R^{1} & \mathbf{C}_R^{2}\\
-\mathbf{C}^{2}_R & \mathbf{C}^{1}_R\\
\end{array}\right]
\left[\begin{array}{c}
\mathbf{h}_{R}^1\\
\mathbf{h}_{R}^2
\end{array}\right]
-
\left[\begin{array}{cc}
\mathbf{C}_I^{1} & \mathbf{C}_I^{2}\\
\mathbf{C}^{2}_I & -\mathbf{C}^{1}_I\\
\end{array}\right]
\left[\begin{array}{c}
\mathbf{h}_{I}^1\\
\mathbf{h}_{I}^2
\end{array}\right]\right)
+\left[\begin{array}{c}
\mathbf{w}_{R}^1\\
\mathbf{w}_{R}^2
\end{array}\right]\\
=
&\sqrt{\frac{\rho}{\mu}}
\left[\begin{array}{c}
\mathbf{C}_R^{1}\mathbf{h}_{R}^1+\mathbf{C}_R^{2} \mathbf{h}_{R}^2-\mathbf{C}_I^{1}\mathbf{h}_{I}^1-\mathbf{C}_I^{2} \mathbf{h}_{I}^2\\
-\mathbf{C}_R^{2}\mathbf{h}_{R}^1+\mathbf{C}_R^{1} \mathbf{h}_{R}^2-\mathbf{C}_I^{2}\mathbf{h}_{I}^1+\mathbf{C}_I^{1} \mathbf{h}_{I}^2
\end{array}\right]
+\left[\begin{array}{c}
\mathbf{w}_{1,R}\\
\mathbf{w}_{2,R}
\end{array}\right],
\end{split}
\end{equation}

\begin{equation}\label{eqn:yi}
\begin{split}
&\mathbf{y}_I=\left[\begin{array}{c}
\mathbf{y}_{I}^1\\
\mathbf{y}_{I}^2
\end{array}
\right]\\
=
&\sqrt{\frac{\rho}{\mu}}
\left(\left[\begin{array}{cc}
\mathbf{C}_R^{1} & \mathbf{C}_R^{2}\\
-\mathbf{C}^{2}_R & \mathbf{C}^{1}_R
\end{array}\right]
\left[\begin{array}{c}
\mathbf{h}_{I}^1\\
\mathbf{h}_{I}^2
\end{array}\right]
+
\left[\begin{array}{cc}
\mathbf{C}_I^{1} & \mathbf{C}_I^{2}\\
\mathbf{C}^{2}_I & -\mathbf{C}^{1}_I\\
\end{array}\right]
\left[\begin{array}{c}
\mathbf{h}_{R}^1\\
\mathbf{h}_{R}^2
\end{array}\right]\right)
+\left[\begin{array}{c}
\mathbf{w}_{I}^1\\
\mathbf{w}_{I}^2
\end{array}\right]\\
=
&\sqrt{\frac{\rho}{\mu}}
\left[\begin{array}{c}
\mathbf{C}_R^{1}\mathbf{h}_{I}^1+\mathbf{C}_R^{2} \mathbf{h}_{I}^2+\mathbf{C}_I^{1}\mathbf{h}_{R}^1+\mathbf{C}_I^{2} \mathbf{h}_{R}^2\\
-\mathbf{C}_R^{2}\mathbf{h}_{I}^1+\mathbf{C}_R^{1} \mathbf{h}_{I}^2+\mathbf{C}_I^{2}\mathbf{h}_{R}^1-\mathbf{C}_I^{1} \mathbf{h}_{R}^2
\end{array}\right]
+\left[\begin{array}{c}
\mathbf{w}_{1,I}\\
\mathbf{w}_{2,I}
\end{array}\right],
\end{split}
\end{equation}
where $\mathbf{y}^1,\mathbf{y}^2\in \mathbb{C}^{T/2 \times 1}$, $\mathbf{w}^1,\mathbf{w}^2\in \mathbb{C}^{T/2 \times 1}$ and $\mathbf{h}^1,\mathbf{h}^2\in \mathbb{C}^{M/2 \times 1}$. Let $\mathbf{h}^1=[~h_1 ~h_2 ~\ldots ~h_{\frac{M}{2}}~]$ and $\mathbf{h}^2=[~h_{\frac{M}{2}+1} ~h_{\frac{M}{2}+2} ~\ldots ~h_{M}~]$.

Furthermore, according to the code structure in (\ref{C}), (\ref{eqn:yr}) and (\ref{eqn:yi}) can be rewritten as
\begin{equation}\label{eqn:yr2}
\begin{split}
&\mathbf{y}_R=\left[\begin{array}{c}
\mathbf{y}_{R}^1\\
\mathbf{y}_{R}^2
\end{array}
\right]\\
=
&\sqrt{\frac{\rho}{\mu}}\left[\begin{array}{c}
\sum^2_{p=1}\mathbf{C}_{p,R}^{1}\mathbf{h}_{R}^1+\sum^2_{p=1}\mathbf{C}_{p,R}^{2} \mathbf{h}_{R}^2-\sum^2_{p=1}\mathbf{C}_{p,I}^{1}\mathbf{h}_{I}^1-\sum^2_{p=1}\mathbf{C}_{p,I}^{2} \mathbf{h}_{I}^2\\
-\sum^2_{p=1}\mathbf{C}_{p,R}^{2}\mathbf{h}_{R}^1+\sum^2_{p=1}\mathbf{C}_{p,R}^{1} \mathbf{h}_{R}^2-\sum^2_{p=1}\mathbf{C}_{p,I}^{2}\mathbf{h}_{I}^1+\sum^2_{p=1}\mathbf{C}_{p,I}^{1} \mathbf{h}_{I}^2
\end{array}\right]
+\left[\begin{array}{c}
\mathbf{w}_{R}^1\\
\mathbf{w}_{R}^2
\end{array}\right],
\end{split}
\end{equation}
\begin{equation}\label{eqn:yi2}
\begin{split}
&\mathbf{y}_I=\left[\begin{array}{c}
\mathbf{y}_{I}^1\\
\mathbf{y}_{I}^2
\end{array}
\right]\\
=
&\sqrt{\frac{\rho}{\mu}}
\left[\begin{array}{c}
\sum^2_{p=1}\mathbf{C}_{p,R}^{1}\mathbf{h}_{I}^1+\sum^2_{p=1}\mathbf{C}_{p,R}^{2} \mathbf{h}_{I}^2+\sum^2_{p=1}\mathbf{C}_{p,I}^{1}\mathbf{h}_{R}^1+\sum^2_{p=1}\mathbf{C}_{p,I}^{2} \mathbf{h}_{R}^2\\
-\sum^2_{p=1}\mathbf{C}_{p,R}^{2}\mathbf{h}_{I}^1+\sum^2_{p=1}\mathbf{C}_{p,R}^{1} \mathbf{h}_{I}^2+\sum^2_{p=1}\mathbf{C}_{p,I}^{2}\mathbf{h}_{R}^1-\sum^2_{p=1}\mathbf{C}_{p,I}^{1} \mathbf{h}_{R}^2
\end{array}\right]
+\left[\begin{array}{c}
\mathbf{w}_{I}^1\\
\mathbf{w}_{I}^2
\end{array}\right],
\end{split}
\end{equation}
where
\begin{eqnarray}
\mathbf{C}^i_p=&\left[\begin{array}{c}
      \mathbf{0}_{(p-1)\times (M/2)}\\
      \mathrm{diag}(\mathbf{X}^i_p)\\
      \mathbf{0}_{(P-p)\times (M/2)}
      \end{array}
  \right],\, p=1,2;\,\, i=1,2.
  \end{eqnarray}

Equivalently, we have
\begin{equation}\label{eqn:yr2e}
\begin{split}
&\mathbf{y}_R=\left[\begin{array}{c}
\mathbf{y}_{R}^1\\
\mathbf{y}_{R}^2
\end{array}
\right]= \left[\begin{array}{c}
\mathbf{w}_{R}^1\\
\mathbf{w}_{R}^2
\end{array}\right]+\\
&\sqrt{\frac{\rho}{\mu}}\left[\begin{array}{c}
\mathcal{H}^1_{1,R}\mathbf{X}_{1,R}^{1}+\mathcal{H}^1_{2,R}\mathbf{X}_{2,R}^{1} +\mathcal{H}^2_{1,R}\mathbf{X}_{1,R}^{2}+\mathcal{H}^2_{2,R}\mathbf{X}_{2,R}^{2}-
\mathcal{H}^1_{1,I}\mathbf{X}_{1,I}^{1}-\mathcal{H}^1_{2,I}\mathbf{X}_{2,I}^{1}
-\mathcal{H}^2_{1,I}\mathbf{X}_{1,I}^{2}-\mathcal{H}^2_{2,I}\mathbf{X}_{2,I}^{2}\\
-\mathcal{H}^1_{1,R}\mathbf{X}_{1,R}^{2}-\mathcal{H}^1_{2,R}\mathbf{X}_{2,R}^{2}+\mathcal{H}^2_{1,R}\mathbf{X}_{1,R}^{1}+\mathcal{H}^2_{2,R}\mathbf{X}_{2,R}^{1}-\mathcal{H}^1_{1,I}\mathbf{X}_{1,I}^{2}-\mathcal{H}^1_{2,I}\mathbf{X}_{2,I}^{2}+\mathcal{H}^2_{1,I}\mathbf{X}_{1,I}^{1}+\mathcal{H}^2_{2,I}\mathbf{X}_{2,I}^{1}
\end{array}\right],
\end{split}
\end{equation}
\begin{equation}\label{eqn:yi2e}
\begin{split}
&\mathbf{y}_I=\left[\begin{array}{c}
\mathbf{y}_{I}^1\\
\mathbf{y}_{I}^2
\end{array}
\right]=\left[\begin{array}{c}
\mathbf{w}_{I}^1\\
\mathbf{w}_{I}^2
\end{array}\right]+\\
&\sqrt{\frac{\rho}{\mu}}\left[\begin{array}{c}
\mathcal{H}^1_{1,I}\mathbf{X}_{1,R}^{1}+\mathcal{H}^1_{2,I}\mathbf{X}_{2,R}^{1} +\mathcal{H}^2_{1,I}\mathbf{X}_{1,R}^{2}+\mathcal{H}^2_{2,I}\mathbf{X}_{2,R}^{2}+
\mathcal{H}^1_{1,R}\mathbf{X}_{1,I}^{1}+\mathcal{H}^1_{2,R}\mathbf{X}_{2,I}^{1}
+\mathcal{H}^2_{1,R}\mathbf{X}_{1,I}^{2}+\mathcal{H}^2_{2,R}\mathbf{X}_{2,I}^{2}\\
-\mathcal{H}^1_{1,I}\mathbf{X}_{1,R}^{2}-\mathcal{H}^1_{2,I}\mathbf{X}_{2,R}^{2}+\mathcal{H}^2_{1,I}\mathbf{X}_{1,R}^{1}+\mathcal{H}^2_{2,I}\mathbf{X}_{2,R}^{1}+\mathcal{H}^1_{1,R}\mathbf{X}_{1,I}^{2}+\mathcal{H}^1_{2,R}\mathbf{X}_{2,I}^{2}-\mathcal{H}^2_{1,R}\mathbf{X}_{1,I}^{1}-\mathcal{H}^2_{2,R}\mathbf{X}_{2,I}^{1}
\end{array}\right]
\end{split}
\end{equation}
where
\begin{eqnarray}
\mathcal{H}_{p,R}^i=\left[\begin{array}{c}
      \mathbf{0}_{(p-1)\times (M/2)}\\
      \mathrm{diag}(\mathbf{h}_R^{i})\\
      \mathbf{0}_{(2-p)\times (M/2)}
      \end{array}
  \right], {\rm{and}~}
\mathcal{H}_{p,I}^i=\left[\begin{array}{c}
      \mathbf{0}_{(p-1)\times (M/2)}\\
      \mathrm{diag}(\mathbf{h}_I^{i})\\
      \mathbf{0}_{(2-p)\times (M/2)}
      \end{array}
  \right],
\end{eqnarray}
for $p=1,2$ and $i=1,2$.

Next, we gather the equations $\mathbf{y}_{R}^1$, $\mathbf{y}_{R}^2$, $\mathbf{y}_{I}^1$ and $\mathbf{y}_{I}^2$ to form a real system as follows
\begin{eqnarray}\label{eqn:ye1}\nonumber
\left[\begin{array}{c}
\mathbf{y}_{R}^1\\
\mathbf{y}_{R}^2\\
\mathbf{y}_{I}^1\\
\mathbf{y}_{I}^2
   \end{array}
  \right]&=&
 \sqrt{\frac{\rho}{\mu}}
\left[\begin{array}{cccccccc}
  \mathcal{H}^1_{1,R}&\mathcal{H}^1_{2,R}&\mathcal{H}^2_{1,R}&\mathcal{H}^2_{2,R}&-\mathcal{H}^1_{1,I}&-\mathcal{H}^1_{2,I}&-\mathcal{H}^2_{1,I}&-\mathcal{H}^2_{2,I}\\
\mathcal{H}^2_{1,R}&\mathcal{H}^2_{2,R}&-\mathcal{H}^1_{1,R}&-\mathcal{H}^1_{2,R}&\mathcal{H}^2_{1,I}&\mathcal{H}^2_{2,I}&-\mathcal{H}^1_{1,I}&-\mathcal{H}^1_{2,I}\\
   \mathcal{H}^1_{1,I}&\mathcal{H}^1_{2,I}&\mathcal{H}^2_{1,I}&\mathcal{H}^2_{2,I}&\mathcal{H}^1_{1,R}&\mathcal{H}^1_{2,R}&\mathcal{H}^2_{1,R}&\mathcal{H}^2_{2,R}\\
  \mathcal{H}^2_{1,I}&\mathcal{H}^2_{2,I}&-\mathcal{H}^1_{1,I}&-\mathcal{H}^1_{2,I}&-\mathcal{H}^2_{1,R}&-\mathcal{H}^2_{2,R}&\mathcal{H}^1_{1,R}&\mathcal{H}^1_{2,R}
  \end{array}
  \right]
    \left[\begin{array}{c}
    \mathbf{X}^1_{1,R}\\\mathbf{X}^1_{2,R}\\ \mathbf{X}^2_{1,R} \\\mathbf{X}^2_{2,R} \\ \mathbf{X}^1_{1,I} \\ \mathbf{X}^1_{2,I} \\\mathbf{X}^2_{1,I} \\\mathbf{X}^2_{2,I}
    \end{array}
  \right]\\
  &+&\left[\begin{array}{c}
\mathbf{w}_{R}^1\\
\mathbf{w}_{R}^2\\
\mathbf{w}_{I}^1\\
\mathbf{w}_{I}^2
\end{array}\right].
\end{eqnarray}


In order to obtain the equivalent signal model in (\ref{eqyr}), using (\ref{roxir}) and (\ref{roxii}) we can rewrite (\ref{eqn:ye1}) as
\begin{eqnarray}\label{eqn:ye2}
\left[\begin{array}{c}
\mathbf{y}_{R}^1\\
\mathbf{y}_{R}^2\\
\mathbf{y}_{I}^1\\
\mathbf{y}_{I}^2   \end{array}
  \right]=
 \sqrt{\frac{\rho}{\mu}}\mathcal{H}
    \left[\begin{array}{c}
  \mathbf{s}^1_{1,R}\\ \mathbf{s}^1_{2,R}\\ \mathbf{s}^2_{1,R} \\ \mathbf{s}^2_{2,R}\\ \mathbf{s}^1_{1,I}\\ \mathbf{s}^1_{2,I}\\ \mathbf{s}^2_{1,I}\\ \mathbf{s}^2_{2,I}
    \end{array}
  \right]
  +\left[\begin{array}{c}
\mathbf{w}_{R}^1\\
\mathbf{w}_{R}^2\\
\mathbf{w}_{I}^1\\
\mathbf{w}_{I}^2
\end{array}\right],
\end{eqnarray}
where the equivalent real channel matrix $\mathcal{H} \in \mathbb{R}^{2T \times 4M}$  is given by
\begin{eqnarray}\label{eqn:ye}
\mathcal{H}&=& \left[\begin{array}{cccccccc}\nonumber
\mathcal{H}^1_{1,R}\mathbf{\Theta}_{A}&\mathcal{H}^1_{2,R}\mathbf{\Theta}_{A}&\mathcal{H}^2_{1,R}\mathbf{\Theta}_{A}&\mathcal{H}^2_{2,R}\mathbf{\Theta}_{A}&-\mathcal{H}^1_{1,I}\mathbf{\Theta}_{B}&-\mathcal{H}^1_{2,I}\mathbf{\Theta}_{B}&-\mathcal{H}^2_{1,I}\mathbf{\Theta}_{B}&-\mathcal{H}^2_{2,I}\mathbf{\Theta}_{B}\\
\mathcal{H}^2_{1,R}\mathbf{\Theta}_{A}&\mathcal{H}^2_{2,R}\mathbf{\Theta}_{A}&-\mathcal{H}^1_{1,R}\mathbf{\Theta}_{A}&-\mathcal{H}^1_{2,R}\mathbf{\Theta}_{A}&\mathcal{H}^2_{1,I}\mathbf{\Theta}_{B}&\mathcal{H}^2_{2,I}\mathbf{\Theta}_{B}&-\mathcal{H}^1_{1,I}\mathbf{\Theta}_{B}&-\mathcal{H}^1_{2,I}\mathbf{\Theta}_{B}\\
   \mathcal{H}^1_{1,I}\mathbf{\Theta}_{A}&\mathcal{H}^1_{2,I}\mathbf{\Theta}_{A}&\mathcal{H}^2_{1,I}\mathbf{\Theta}_{A}&\mathcal{H}^2_{2,I}\mathbf{\Theta}_{A}&\mathcal{H}^1_{1,R}\mathbf{\Theta}_{B}&\mathcal{H}^1_{2,R}\mathbf{\Theta}_{B}&\mathcal{H}^2_{1,R}\mathbf{\Theta}_{B}&\mathcal{H}^2_{2,R}\mathbf{\Theta}_{B}\\
  \mathcal{H}^2_{1,I}\mathbf{\Theta}_{A}&\mathcal{H}^2_{2,I}\mathbf{\Theta}_{A}&-\mathcal{H}^1_{1,I}\mathbf{\Theta}_{A}&-\mathcal{H}^1_{2,I}\mathbf{\Theta}_{A}&-\mathcal{H}^2_{1,R}\mathbf{\Theta}_{B}&-\mathcal{H}^2_{2,R}\mathbf{\Theta}_{B}&\mathcal{H}^1_{1,R}\mathbf{\Theta}_{B}&\mathcal{H}^1_{2,R}\mathbf{\Theta}_{B}  \end{array}
  \right]\\
  &=&\left[\begin{array}{cccccccc}\label{pic}
  \mathbf{G}_1 &\mathbf{G}_2 &\ldots & \mathbf{G}_8
  \end{array}
\right].
  \end{eqnarray}

  According to \emph{Definition \ref{orde}}, we obtain that the groups $\mathbf{G}_1, ~\mathbf{G}_3, ~\mathbf{G}_5, ~\mathbf{G}_7$ are orthogonal, and the groups $\mathbf{G}_2, ~\mathbf{G}_4, ~\mathbf{G}_6, ~\mathbf{G}_8$ are orthogonal as well.

\section*{Acknowledgment}
The authors would like to thank Tianyi Xu for his reading and comments on this manuscript.

\newpage
\begin{table}[t]\label{table:picM4}
\begin{center}\caption{Comparison in PIC Group Decoding Complexity}
\renewcommand{\arraystretch}{1.5}
\centering
\begin{tabular}{c c c c c}
\hline \hline
\bfseries Codes   & \bfseries  Groups & \bfseries  Symbols/Group & \bfseries Decoding Complexity\\
\hline
$\mathbf{C}_{4,5,2}/\mathbf{C}_{4,6,2}$ \cite{wei}   & $2$ & $4$ (Complex) & $2{|\mathcal{A}|^4}$\\
$\mathbf{\mathbf{B}}_{4,6,2}$\cite{jisit}    & $4$ & $2$ (Complex) & $4{|\mathcal{A}|^2}$\\
$\mathbf{\mathbf{\Phi}}_{4,6,2}$   & $8$ & $2$ (Real) & $8{|\mathcal{A}|}$\\
\hline\hline
\end{tabular}
\end{center}
\end{table}

\begin{figure}[t!]
    \begin{center}
        \includegraphics[width=1\columnwidth]{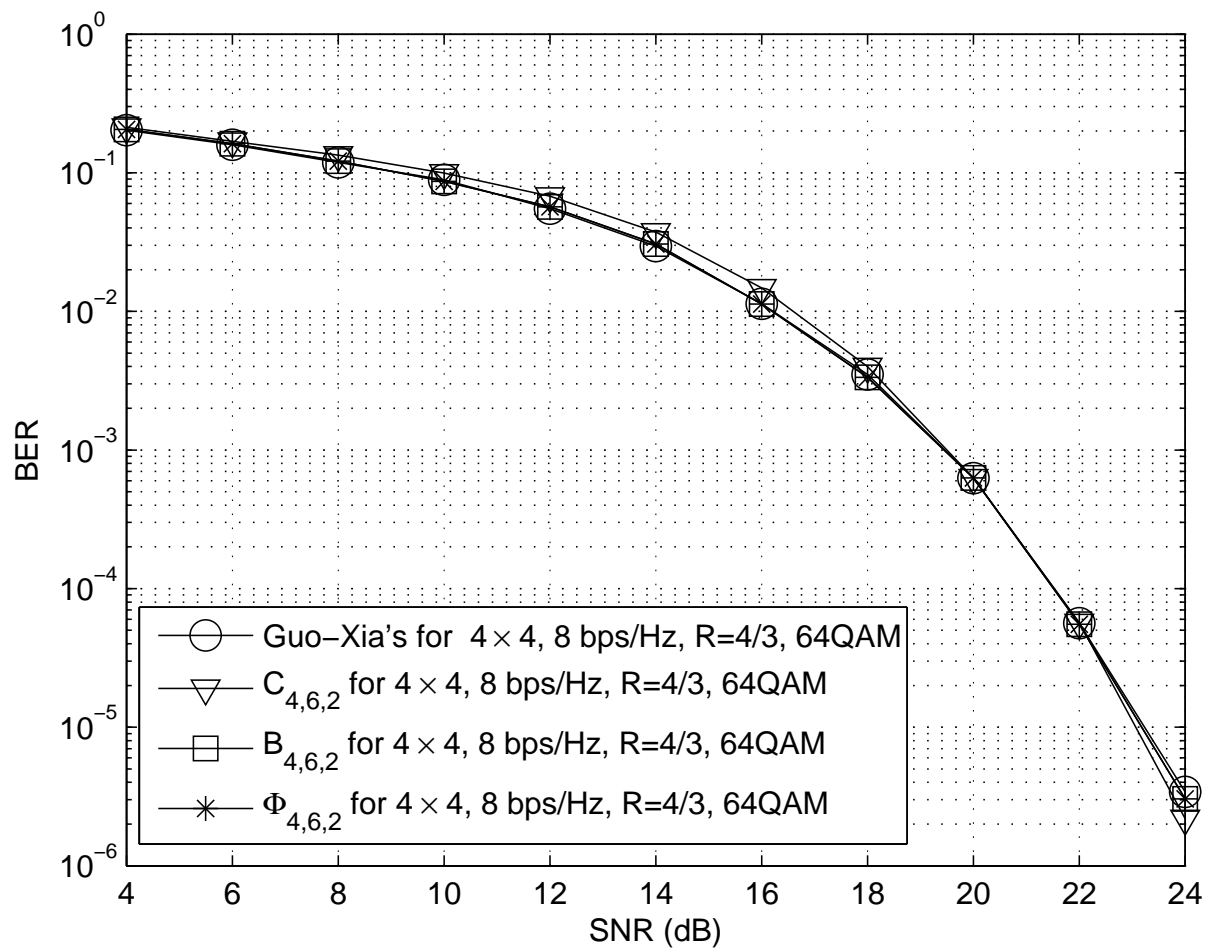}
        \caption{BER performance of various codes with PIC group decoding with 4 transmit antennas and 4 receive antennas.}
        \label{fig:1}
    \end{center}
\end{figure}
\newpage
\begin{figure}[t!]
    \begin{center}
        \includegraphics[width=1\columnwidth]{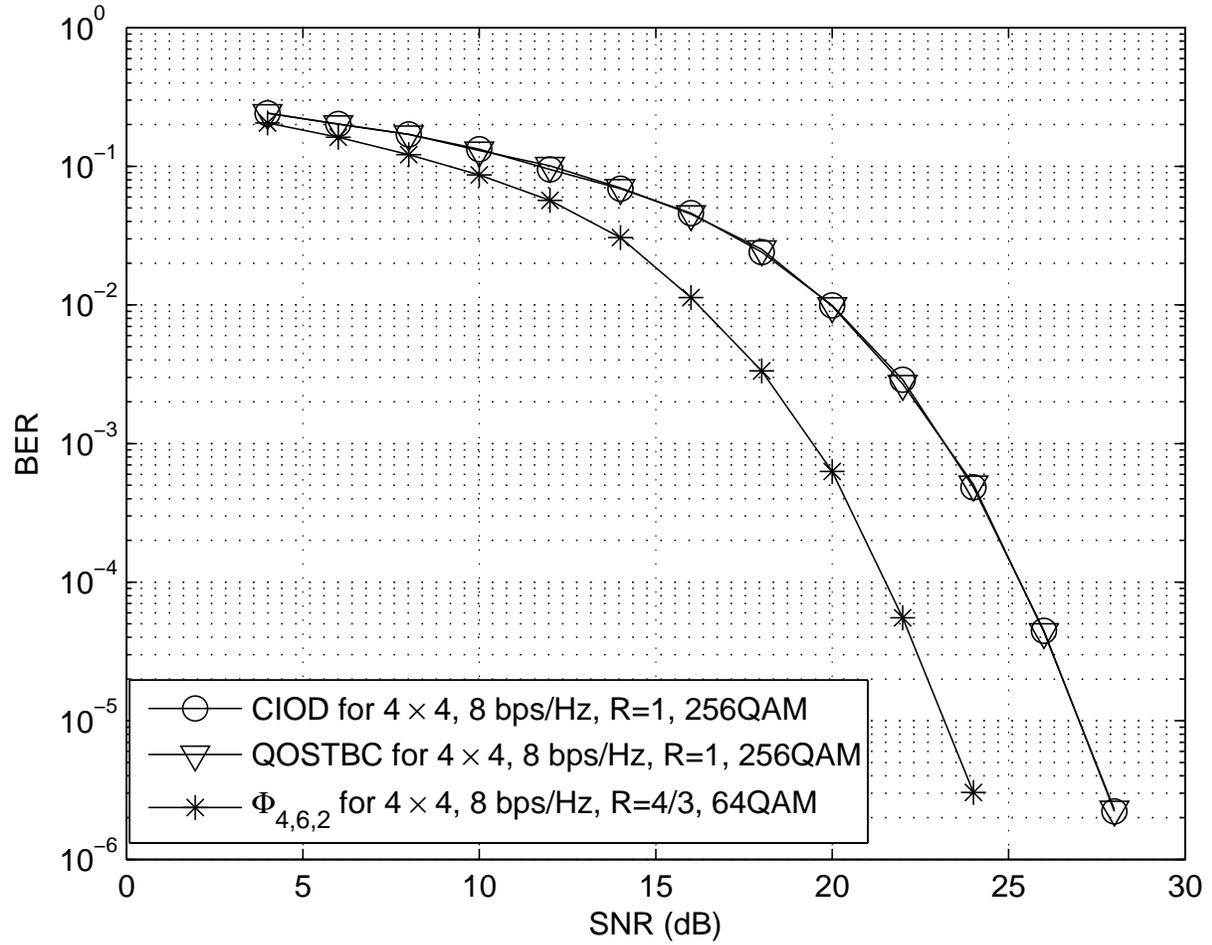}
        \caption{BER performance of various codes with 4 transmit antennas and 4 receive antennas at $8$ bps/Hz.}
        \label{fig:2}
    \end{center}
\end{figure}
\newpage
\begin{figure}[t!]
    \begin{center}
        \includegraphics[width=1\columnwidth]{}
        \caption{BER performance of various codes with 4 transmit antennas and 4 receive antennas at $9$ bps/Hz with PIC and PIC-SIC group decoding.}
        \label{fig:3}
    \end{center}
\end{figure}

\begin{thebibliography}{1}

{
\bibitem{Alamouti} S. M. Alamouti, ``A simple transmit diversity technique for wireless communication,'' {\em IEEE J. Sel. Areas Commun.}, vol. 16, pp. 1451--1458, Oct. 1998.

\bibitem{Tarokh98} V. Tarokh, N. Seshadri, and A. Calderbank, ``Space-time codes for high data rate wireless communications: Performance criterion and code construction,''
{\em IEEE Trans. Inf. Theory}, vol. 44, pp. 744--765, Mar. 1998.

\bibitem{Tarokh00} V. Tarokh, H. Jafarkhani, and A. R. Calderbank, ``Space-time block codes from orthogonal designs,'' {\em
IEEE Trans. Inf. Theory}, vol. 45, pp. 1456--1467, July 1999. Also, ``Corrections to
 `Space-time block codes from orthogonal designs','' {\em IEEE Trans. Inf. Theory}, vol.
 46, p. 314, Jan. 2000.



 \bibitem{Lu} K. Lu, S. Fu, and X.-G. Xia, ``Closed form designs of complex orthogonal space-time block codes of rates
 $(k+1)/(2k)$ for $2k-1$ or $2k$ transmit antennas,''   {\em IEEE Trans. Inf. Theory},
 vol. 51, pp. 4340--4347, Dec. 2005.

\bibitem{Wang} H. Wang and X.-G Xia, ``Upper bounds of rates of complex orthogonal space-time block codes,'' {\em IEEE Trans. Inf. Theory}, vol 49, pp. 2788--2796, Oct. 2003.

\bibitem{sandhu} S. Sandhu and A. Paulraj, ``Space-time block coding: a capacity perspective,'' {\em IEEE Commun. Letter}, vol. 4, pp. 384--386, Dec. 2000.

\bibitem{has} B. Hassibi and B. Hochwald, ``High-rate codes that are linear in space and time,'' {\em IEEE Trans. Inf. Theory}, vol. 48, pp. 1804--1824, July. 2002.

\bibitem{heath} J. R. W. Heath and A. J. Paulraj, ``Linear dispersion codes for MIMO systems based on frame theory,'' {\em IEEE Trans. Signal. Process.}, vol. 50, pp. 2429--2441, Oct. 2002.

\bibitem{dast} M. O. Damen, K. A. Meraim, and J. C. Belfiore, ``Diagonal algebraic space-time block codes,'' {\em IEEE Trans. Inf. Theory}, vol. 48, pp. 628--636, Mar. 2002.

\bibitem{tast} H. El Gamal and M. O. Damen, ``Universal space-time coding,'' {\em IEEE Trans. Inf. Theory}, vol. 49, pp. 1097--1119, May 2003.

\bibitem{Ra} M. Khan and B. Rajan, ``Single-symbol maximum likelihood decodable linear STBCs,'' {\em IEEE Trans. Inf. Theory}, vol. 52, pp. 2062--2091, May 2006.


\bibitem{Yuen} C. Yuen, Y. Guan, and T. T. Tjhung, ``Quasi-orthogonal STBC with minimum decoding complexity,'' {\em IEEE Trans. Wireless Commun.}, vol. 4, pp. 2089--2094, Sep. 2005.

\bibitem{quan} H. Wang, D. Wang, and X.-G Xia, ``On optimal quasi-orthogonal space-time block codes with minimum decoding complexity,'' {\em IEEE Trans. Inf. Theory}, vol. 55, pp. 1104--1130, Mar. 2009.



\bibitem{Liu} J. Liu, J.-K. Zhang, and K. M. Wong, ``Full-diversity codes for MISO systems equipped with linear or ML detectors,''
{\em IEEE Trans. Inf. Theory}, vol. 54,   pp. 4511--4527, Oct. 2008.

\bibitem{Shang} Y. Shang and X.-G. Xia, ``Space-time block codes achieving full diversity with linear receivers,'' {\em IEEE Trans. Inf. Theory}, vol. 54,   pp. 4528--4547, Oct. 2008.


\bibitem{guoxia} X. Guo and X.-G. Xia, ``On full diversity space-time block codes with partial interference cancellation group decoding,'' {\em IEEE Trans. Inf. Theory}, vol. 55, pp. 4366--4385, Oct. 2009. Also, ``Corrections to `On full diversity space-time block codes with partial interference cancellation group decoding','' \url{http://www.ece.udel.edu/~xxia/correction_guo_xia.pdf}.

\bibitem{wei} W. Zhang, T. Xu, and X.-G. Xia, ``Two designs of space-time block codes achieving full diversity with partial interference cancellation group decoding,'' {\em IEEE Trans. Inf. Theory}, submitted. \url{http://arxiv.org/abs/0904.1812v3}

\bibitem{jisit} W. Zhang, L. Shi, and X.-G. Xia, ``Full diversity space-time block codes with low-complexity partial interference cancellation group decoding,'' {\em IEEE Trans. Commun.}, submitted. \url{http://arxiv.org/abs/1003.3908 }


\bibitem{viterbo} J. Boutros and E. Viterbo, ``Signal space diversity: a power and bandwidth efficient diveristy technique for the Rayleigh fading channel,''
{\em IEEE Trans. Inf. Theory}, vol. 44, pp. 1453--1467, July 1998.

\bibitem{DRT} D. Rainish, ``Diversity transform for fading channels,'' {\em IEEE Trans. Commun}, vol. 44, pp. 1653--1661, July 1998.

\bibitem{WangGY} G. Wang, H. Liao, H. Wang,  and X.-G. Xia, ``Systematic and optimal cyclotomic lattices and diagonal space-time block code designs,''
{\em IEEE Trans. Inf. Theory}, vol. 50, pp. 3348--3360, Dec. 2004.


\bibitem{Foschini} G. J. Foschini, ``Layered space-time architecture for wireless communication in a fading
environment when using multi-element antennas,'' {\em Bell Labs Tech. J.}, vol. 1, no. 2, pp.
41--59, 1996.



}
\end{thebibliography}
\end{document}